\providecommand{\tabularnewline}{\\}
\theoremstyle{definition}
\newtheorem*{defn*}{\protect\definitionname}
\theoremstyle{plain}
\newtheorem*{lyxalgorithm*}{\protect\algorithmname}
\theoremstyle{plain}
\newtheorem{thm}{\protect\theoremname}
\theoremstyle{plain}
\newtheorem{prop}[thm]{\protect\propositionname}
\providecommand{\algorithmname}{Algorithm}
\providecommand{\definitionname}{Definition}
\providecommand{\propositionname}{Proposition}
\providecommand{\theoremname}{Theorem}
\begin{document}
\global\long\def\bra{\langle}%
\global\long\def\ket{\rangle}%
\global\long\def\half{\frac{1}{2}}%
\global\long\def\third{\frac{1}{3}}%
\global\long\def\dx{\frac{\partial}{\partial x}}%
\global\long\def\dxd{\frac{\partial}{\partial\dot{x}}}%
\global\long\def\thrha{\frac{3}{2}}%
\global\long\def\sq{\sqrt{2}}%
\global\long\def\sqinv{\frac{1}{\sqrt{2}}}%
\global\long\def\up{\uparrow}%
\global\long\def\do{\downarrow}%
\global\long\def\p{\partial}%
\global\long\def\dqi{\frac{\partial}{\partial q_{i}}}%
\global\long\def\dqid{\frac{\partial}{\partial\dot{q}_{i}}}%
\global\long\def\a{\alpha}%
\global\long\def\b{\beta}%
\global\long\def\g{\gamma}%
\global\long\def\c{\chi}%
\global\long\def\d{\delta}%
\global\long\def\o{\omega}%
\global\long\def\m{\mu}%
\global\long\def\n{\nu}%
\global\long\def\z{\zeta}%
\global\long\def\l{\lambda}%
\global\long\def\e{\epsilon}%
\global\long\def\x{\chi}%
\global\long\def\r{\rho}%
\global\long\def\t{\theta}%
\global\long\def\G{\Gamma}%
\global\long\def\D{\mathcal{D}}%
\global\long\def\O{\mathcal{O}}%
\global\long\def\L{\mathcal{L}}%
\global\long\def\T{\mathcal{T}}%
\global\long\def\db{\dbar}%
\global\long\def\dg{\dagger}%
\global\long\def\s{\sigma}%
\global\long\def\ph{\hat{n}}%
\global\long\def\RR{\mathcal{R}}%
\global\long\def\phm{\hat{m}}%
\global\long\def\bm{b_{-}^{\dg}b_{-}}%
\global\long\def\bp{b_{+}^{\dg}b_{+}}%
\global\long\def\lp{\biggl(}%
\global\long\def\rp{\biggr)}%
\global\long\def\pr{\prime}%
\global\long\def\rin{\rho_{\textsf{in}}}%
\global\long\def\hs{\mathsf{H}_{\text{out}}}%
\global\long\def\rss{\rho_{\mathsf{ss}}}%
\global\long\def\pis{\Pi_{\text{out}}}%
\global\long\def\uout{U_{\text{out}}}%
\global\long\def\gout{G_{\text{out}}}%
\global\long\def\ph{\hat{n}}%
\global\long\def\pout{\Pi_{\textsf{ss}}}%
\global\long\def\mout{\mathsf{L_{ss}}}%
\global\long\def\LL{\mathcal{L}}%
\global\long\def\hi{F_{l}}%
\global\long\def\Al{\mathcal{A}_{l}}%
\global\long\def\pn{\pi_{0}}%
\global\long\def\as{\alpha}%
\global\long\def\aaa{a}%
\global\long\def\dag{\dagger}%
\global\long\def\ma{\left|\a\right|}%
\global\long\def\A{\mathcal{A}}%
\global\long\def\tr{\text{Tr}}%
\global\long\def\jp{J_{\m}^{\perp}}%
\global\long\def\I{\mathcal{I}}%
\global\long\def\H{\mathcal{H}}%
\global\long\def\PP{\mathcal{P}}%
\global\long\def\R{\mathcal{P}}%
\global\long\def\pl{\mathcal{P}_{\L}}%
\global\long\def\ot{\otimes}%
\global\long\def\bath{\varrho_{\textsf{aux}}}%
\global\long\def\ns{\text{NS}}%
\global\long\def\pa{\partial^{\prime}}%
\global\long\def\no{n_{\textsf{aux}}}%
\global\long\def\pns{\pi_{\text{NS}}}%
\global\long\def\pa{\pi_{\textsf{aux}}}%
\global\long\def\pp{P}%
\global\long\def\k{\kappa}%
\global\long\def\Q{\mathcal{Q}}%
\global\long\def\oph{\text{Op}(\mathsf{H})}%
\global\long\def\h{\mathsf{H}}%
\global\long\def\bb{\langle\!\langle}%
\global\long\def\kk{\rangle\!\rangle}%
\global\long\def\da{d_{\textsf{aux}}}%
\global\long\def\hns{\mathsf{H}_{\textsf{NS}}}%
\global\long\def\ha{\mathsf{H}_{\textsf{aux}}}%
\global\long\def\KK{\mathcal{H}^{\pr}}%
\global\long\def\pert{H^{\pr}}%
\global\long\def\kns{H_{\text{\textsf{NS}}}^{\pr}}%
\global\long\def\ka{H_{\textsf{aux}}^{\pr}}%
\global\long\def\W{\mathcal{W}}%
\global\long\def\kkk#1{\left.\left|#1\right\rangle \!\!\!\right\rangle }%
\global\long\def\ld{\dot{\l}}%
\global\long\def\GG{\mathcal{G}}%
\global\long\def\QQ{\mathcal{Q}_{\textsf{ss}}}%
\global\long\def\id{\mathcal{I}}%
\global\long\def\U{\mathcal{U}}%
\global\long\def\ppp{\mathcal{P}_{\!\!\A}}%
\global\long\def\E{\mathcal{E}}%
\global\long\def\J{\mathcal{J}}%
\global\long\def\dgt{\ddagger}%
\global\long\def\la{\varDelta}%
\global\long\def\ps{\mathcal{P_{\E}}}%
\global\long\def\M{\mathcal{M}}%
\global\long\def\B{\mathcal{B}}%
\global\long\def\dfunc{\mathcal{J}}%
\global\long\def\jd{J_{\textnormal{\textsf{dgn}}}}%
\global\long\def\jn{J_{\textnormal{\textsf{nil}}}}%
\global\long\def\Z{\mathbb{Z}}%
\global\long\def\qq{Q}%
\global\long\def\st{\varPsi}%
\global\long\def\St{\varPsi}%
\global\long\def\P{\Phi}%
\global\long\def\Tr{\textsc{Tr}}%
\global\long\def\fe{\textnormal{fix}({\cal E})}%
\global\long\def\rout{\r_{\infty}}%
\global\long\def\fed{\textnormal{fix}({\cal E^{\dgt}})}%
\global\long\def\bo{\widetilde{B}}%
\global\long\def\stt{\widetilde{\St}}%

\DeclareRobustCommand{\ul}{{\raisebox{2pt}{\ytableaushort{ {*(black)} {} , {} {} }}}} 
\DeclareRobustCommand{\ur}{{\raisebox{2pt}{\ytableaushort{ {} {*(black)} , {} {} }}}} 
\DeclareRobustCommand{\ll}{{\raisebox{2pt}{\ytableaushort{ {} {} , {*(black)} {} }}}} 
\DeclareRobustCommand{\lr}{{\raisebox{2pt}{\ytableaushort{ {} {} , {} {*(black)} }}}} 
\DeclareRobustCommand{\di}{{\raisebox{2pt}{\ytableaushort{ {*(black)} {} , {} {*(black)} }}}} 
\DeclareRobustCommand{\of}{{\raisebox{2pt}{\ytableaushort{ {} {*(black)} , {*(black)} {} }}}}
\DeclareRobustCommand{\thr}{{\raisebox{2pt}{\ytableaushort{ {*(black)} {*(black)} , {} {*(black)} }}}}
\DeclareRobustCommand{\thu}{{\raisebox{2pt}{\ytableaushort{ {*(black)} {*(black)} , {*(black)} {} }}}}
\DeclareRobustCommand{\tho}{{\raisebox{2pt}{\ytableaushort{ {} {*(black)} , {*(black)} {*(black)} }}}}
\DeclareRobustCommand{\thd}{{\raisebox{2pt}{\ytableaushort{ {*(black)} {*(black)} , {} {*(black)} }}}}
\DeclareRobustCommand{\emp}{{\raisebox{2pt}{\ytableaushort{ {} {} , {} {} }}}}
\ytableausetup{boxsize = 2pt}
\newcommand{\ulbig}{\ytableausetup{boxsize = 3pt}
\raisebox{3pt}{\ytableaushort{ {*(black)} {} , {} {} }}
\ytableausetup{boxsize = 2pt}}
\newcommand{\ofbig}{\ytableausetup{boxsize = 3pt}
\raisebox{3pt}{\ytableaushort{ {} {*(black)} , {*(black)} {} }}
\ytableausetup{boxsize = 2pt}}
\newcommand{\urbig}{\ytableausetup{boxsize = 3pt}
\raisebox{3pt}{\ytableaushort{ {} {*(black)} , {} {} }}
\ytableausetup{boxsize = 2pt}}
\newcommand{\llbig}{\ytableausetup{boxsize = 3pt}
\raisebox{3pt}{\ytableaushort{ {} {} , {*(black)} {} }}
\ytableausetup{boxsize = 2pt}}
\newcommand{\lrbig}{\ytableausetup{boxsize = 3pt}
\raisebox{3pt}{\ytableaushort{ {} {} , {} {*(black)} }}
\ytableausetup{boxsize = 2pt}}
\newcommand{\thubig}{\ytableausetup{boxsize = 3pt}
\raisebox{3pt}{\ytableaushort{ {*(black)} {*(black)} , {*(black)} {} }}
\ytableausetup{boxsize = 2pt}}
\newcommand{\empbig}{\ytableausetup{boxsize = 3pt}
\raisebox{3pt}{\ytableaushort{ {} {} , {} {} }}
\ytableausetup{boxsize = 2pt}}
\newcommand{\dibig}{\ytableausetup{boxsize = 3pt}
\raisebox{3pt}{\ytableaushort{ {*(black)} {} , {} {*(black)} }}
\ytableausetup{boxsize = 2pt}}
\newenvironment{numtheorem}[1]{\begin{trivlist}\item[\hskip \labelsep {\bf #1}]\it}{\end{trivlist}}

\title{Asymptotics of quantum channels: conserved quantities, an adiabatic
limit, and matrix product states}
\author{Victor~V.~Albert}
\affiliation{Walter Burke Institute for Theoretical Physics and Institute for Quantum
Information and Matter, California Institute of Technology, Pasadena,
California, USA}
\affiliation{Yale Quantum Institute, Departments of Applied Physics and Physics,
Yale University, New Haven, Connecticut, USA}
\orcid{0000-0002-0335-9508}

\maketitle
\selectlanguage{english}%
\begin{abstract}
This work derives an analytical formula for the asymptotic state---the
quantum state resulting from an infinite number of applications of
a general quantum channel on some initial state. For channels admitting
multiple fixed or rotating points, conserved quantities---the \textit{left}
fixed/rotating points of the channel---determine the dependence of
the asymptotic state on the initial state. The formula stems from
a Noether-like theorem stating that, for any channel admitting a full-rank
fixed point, conserved quantities commute with that channel’s Kraus
operators up to a phase. The formula is applied to adiabatic transport
of the fixed-point space of channels, revealing cases where the dissipative/spectral
gap can close during any segment of the adiabatic path. The formula
is also applied to calculate expectation values of noninjective matrix
product states (MPS) in the thermodynamic limit, revealing that those
expectation values can also be calculated using an MPS with reduced
bond dimension and a modified boundary.
\end{abstract}

\section{Introduction \& outline}

A quantum channel $\A$ (also called a quantum markov chain, Kraus
map, or completely-positive trace-preserving map) is the most general
map between two quantum systems. Channels enjoy a range of applications,
primarily in the quantum information community \cite{caruso2014},
but also in studies of matrix product states \cite{Fannes1992,Perez-Garcia2006},
entanglement renormalization \cite{Giovannetti2008,Pfeifer2009},
computability theory \cite{aaronson2016}, and even biological inference
processes \cite{Lee2016}. Whenever one studies such maps, a key question
to ask is:
\begin{equation}
\begin{array}{c}
\text{\textit{What} \textit{information} \textit{from} \textit{an} \textit{initial} \textit{state }\ensuremath{\r}}\\
\textit{survives }\textit{under}\textit{ repeated}\textit{ application}\textit{ of }\A?
\end{array}\label{eq:q1}
\end{equation}
Answering this question helps determine how to properly initialize
a quantum computer \cite{pub011,Dengis2014}, optimize the preparation
of certain exotic states \cite{Lloyd2001,Kraus2008} (part of the
active field of reservoir engineering \cite{Poyatos1996}), and even
determine properties of matrix product states in the thermodynamic
limit \cite{Perez-Garcia2006}. Quantum channels have already been
engineered in, e.g., trapped ion \cite{Schindler2013} and IBM's circuit
QED \cite{Wei2018} platforms, and reliable engineering of arbitrary
channels is actively being investigated \cite{Muller2012,Shen2016,Ticozzi2017}.
Thus, it is important for the quantum community to be on firm ground
with respect to question (\ref{eq:q1}).

There have been several studies tackling question (\ref{eq:q1}),
but many of them focused on either (A) Lindbladian channels, (B) on
channels containing only fixed points (eigenmatrices with eigenvalue
$+1$) and no rotating points (eigenmatrices with eigenvalue having
modulus $+1$), and/or (C) channels admitting a full-rank steady state
(i.e., channels that do not cause an entire subspace to decay to zero
upon infinite applications). An answer to question (\ref{eq:q1})
for the most general case is different than for any one of the specific
cases (A-C). For example, due to the ability of general channels to
permute states upon only one application, an answer \cite{pub011}
for channels generated by a Lindbladian \cite{Belavin1969,Lindblad1976,Gorini1976a,Banks1984}
does not cover the general case. To my knowledge, the following closely
related works considered channels with rotating points in detail.
Reference \cite{Wolf2010b} (supplemented with a guided tour by Wolf
\cite{wolf2010}) and an appendix of Ref.~\cite{robin} derive a
limit of infinite applications of a general channel, carefully canceling
perpetually oscillating phases. There have been two different decompositions
associated with general channels: \cite[Thm. 8]{wolf2010} being
a fine-grained block-decomposition of the \textit{asymptotic subspace}---the
subspace that survives under repeated application of the channel---and
\cite[Thm. 2]{Guan2018} being a coarser algebraic decomposition of
the Kraus operators into blocks of noiseless subsystems (see Sec.~\ref{subsec:Irreducible-channels}
for details). The works \cite{Novotny2012,Burgarth2013a,Novotny2017}
as well as older literature (e.g., \cite{groh,Fannes1992}; see also
\cite{wolf2010}) consider classes of channels with rotating points,
but such channels do not admit a decaying subspace. In fact, none
of the works listed fully address conserved quantities of channels
admitting decay.

In the MPS context \cite{Perez-Garcia2006}, a state's transfer channel
$\A$ (also called a double tensor \cite{Zeng2015}) is a channel
whose Kraus operators are used to construct the MPS. The limit of
infinite applications of $\A$ is intimately related to the thermodynamic
limit of the MPS. Question (\ref{eq:q1}) is then equivalent to
\begin{equation}
\begin{array}{c}
\text{\textit{What} \textit{part} \textit{of} \textit{the} \textit{MPS and its boundary conditions}}\\
\textit{survive }\textit{in the thermodynamic limit}?
\end{array}\label{eq:q2}
\end{equation}
The exact connection between quantum channels and MPSs has been well-studied
for the case when the MPS is injective---when its transfer channel
has only one fixed point and no rotating points. However, non-injective
MPS (with $\A$ having multiple fixed/rotating points) arise naturally
in physical systems, ranging from anti-ferromagnets and Majumdar-Ghosh
states \cite{Perez-Garcia2006} to more general spin chains \cite{Fannes1992,Bachmann2012,Bachmann2014}
and systems with topological degeneracy \cite{Perez-Garcia2008,Schuch2010}.
Non-injectivity is also ever-present in fermionic MPS \cite{Bultinck2017}
due to fermion parity preservation. In such MPS, effects due to the
interaction of ``twisted'' boundary conditions with decaying bond
degrees of freedom can persist in the thermodynamic limit, thus needing
to be taken into account. Applying results about $\A$, this work
derives a consistent thermodynamic limit, improving on Ref.~\cite{Haegeman2014},
Eqs. (130-133) (cf. \cite{Cirac,Vanderstraeten2019}). I also provide
general results about expectation values, answering question (\ref{eq:q2})
and showing how to absorb any ``twisted'' boundary effects into
the boundary matrix of an MPS with smaller bond dimension.

The first part of this manuscript answers question (\ref{eq:q1})
in a manner meant to be accessible to a large part of the quantum
community. Section \ref{sec:Asymptotics-of-channels} introduces some
notation and the general asymptotic framework. In Sec. \ref{sec:Structure-of-conserved},
I start with channels $\E$ that do not admit decay, i.e., have a
full-rank fixed-point. In Sec. \ref{sec:Extending-to-general}, I
pad the Kraus operators of $\E$ to obtain the channel $\A$ admitting
a decaying subspace. It is shown how to construct the conserved quantities
of $\A$ from those of $\E$. In Sec. \ref{sec:Application:-information-preserv},
I work in known results about the finer algebraic block decomposition
of the asymptotic subspace, addressing question (\ref{eq:q1}) for
various subspaces. Special care is taken regarding asymptotics of
irreducible channels---channels with a unique fixed point and one
or more rotating points---as those differ substantially from any
other case.

The second part of this work puts the results from the first part
to use in two applications. Given a general channel $\A$, it is shown
that its restriction $\E$ to the maximal invariant subspace is sufficient
for both applications. Section \ref{sec:Adiabaticity-of-Kraus} develops
an adiabatic limit for fixed-point spaces of continuous families of
channels $\A^{(s)}$, showing that this limit quickly reduces to the
limit of the corresponding Lindbladian channels $\E^{(s)}-\id$. This
answers the third question posed in this work,
\begin{equation}
\begin{array}{c}
\text{\text{\textit{What are any differences between the fixed-point}}}\\
\text{\textit{adiabatic limits of Hamiltonians and channels?}}
\end{array}\label{eq:q1-1}
\end{equation}
It is shown that a gap condition need only hold for a part of the
channel, and the part acting on the decaying subspace can have its
gap close without affecting the limit. Section \ref{sec:Application:Matrix-product-state}
applies to matrix product states, where it is shown how one can calculate
expectation values in the thermodynamic limit of MPS associated with
$\A$ using MPS associated with $\E$.

\section{Asymptotics of channels\label{sec:Asymptotics-of-channels}}

The canonical form of a quantum channel $\A$ and its adjoint $\A^{\dgt}$
(a generalization of the Heisenberg picture defined under the Frobenius
norm) is \cite{Sudarshan1961,Kraus1971,Choi1975}
\begin{equation}
\A\left(\r\right)=\sum_{\ell}A^{\ell}\r A^{\ell\dg}\,\,\,\,\,\,\,\,\,\,\text{and}\,\,\,\,\,\,\,\,\,\,\A^{\dgt}\left(O\right)=\sum_{\ell}A^{\ell\dg}OA^{\ell}\,,\label{eq:kraus}
\end{equation}
where $\A$ acts on states $\r$ and $\A^{\dgt}$ on operators $O$.
The matrices $A^{\ell}$ are called the Kraus operators of $\A\equiv\left\{ A^{\ell}\right\} $,
eq. (\ref{eq:kraus}) is the Kraus form of $\A$, and the only requirement
for the channel to be trace preserving is (for $I$ identity) 
\begin{equation}
\A^{\dgt}\left(I\right)=\sum_{\ell}A^{\ell\dg}A^{\ell}=I\,.\label{eq:cptp}
\end{equation}
The Kraus operators are assumed to be $D$-dimensional here, but the
intuition presented here can be extended to certain infinite-dimensional
cases \cite{Carbone2019}. Such channels can be represented as matrices
acting on a vectorized density matrix, i.e., the $D\times D$ matrix
$\r$ written as a $D^{2}$-dimensional vector. Vectorization essentially
``flips'' the bra part in each of the outer products making up $\r$
and $\A$ is written as a $D^{2}\times D^{2}$ matrix of the form
$\hat{\A}=\sum_{\ell}A^{\ell}\ot\overline{A^{\ell}}$ acting on the
vectorized $\r$ strictly from the left (with $\overline{A}$ being
the complex conjugated $A$). This \textit{matrix or Liouville representation}
of $\A$ \cite{Caves1999} is equivalent to the Kraus representation
(\ref{eq:kraus}), and I slightly abuse notation by ignoring hats
and not distinguishing the two. 

In the matrix representation, channels can be studied in terms of
their eigenvalues and eigenmatrices. The eigenvalues of all channels
are contained in the unit disk, and this work focuses on the eigenvalues/matrices
$\st$ on the periphery of that disk, i.e.,
\begin{equation}
\A\left(\st\right)=e^{i\la}\st\,\,\,\,\,\,\,\,\,\,\,\,\,\,\text{for some real }\la\,.
\end{equation}
Such eigenmatrices are called the channel's (right) \textit{rotating
points}, and those with $\la=0$ are called \textit{fixed points}.
The rotating points do not have to be physical states themselves;
e.g., $\St_{kk^{\pr}}=|k\ket\bra k^{\pr}|$ is a rotating point of
the channel $\A(\cdot)=U(\cdot)U^{\dg}$, where $|k\ket,|k^{\pr}\ket$
are eigenstates of $U$. Since $\A$ may not be normal ($[\A,\A^{\dgt}]\neq0$),
the eigenmatrices $J$ of its adjoint --- left rotating points ---
may be different from $\St$:
\begin{equation}
\A^{\dgt}\left(J\right)=e^{-i\la}J\,.
\end{equation}
Left rotating points will be called \textit{conserved quantities}
because their expectation value is either constant or oscillates with
successive powers of $\A$, but does not decay:
\begin{equation}
\tr\{J\A^{n}(\r)\}=\tr\{\A^{\dgt n}(J)\r\}=e^{-in\la}\tr\{J\r\}\,.
\end{equation}
The general block structure of $\varPsi$'s is already well-known
\cite{robin,Lindblad1999,BlumeKohout2008,baumr,Carbone2015}, and
here the focus is on the structure of the $J$'s. It is important
to note that there are as many conserved quantities as there are rotating
points (more technically, the Jordan normal form of $\A$ contains
only trivial Jordan blocks for all eigenvalues on the periphery of
the unit disk; see, e.g., Prop 6.2 in Ref. \cite{wolf2010}). All
channels have at least one fixed point \cite{Evans1978} with corresponding
conserved quantity being the identity, always conserved due to eq.
(\ref{eq:cptp}).

In the limit of many applications of $\A$, all eigenmatrices with
eigenvalues not on the periphery of the unit disk will become irrelevant
and all that will be left of the channel is the projection onto the
subspace spanned by the rotating points. The collective effect of
many applications of $\A$ is quantified by the channel's \textit{asymptotic
projection} $\ppp$,
\begin{equation}
\ppp(\cdot)\equiv\lim_{m\rightarrow\infty}\A^{\a_{m}}(\cdot)\,,\label{eq:asproj}
\end{equation}
which projects onto the eigenspace of the peripheral spectrum of the
channel. The extra parameter $\a$ allows one to take the limit in
such a way as to remove the eigenvalues $e^{i\la}$ arising from application
of $\A$ on $\r$. For any $\la=\frac{2\pi}{N}n$ (for some positive
integers $n,N$), rotating points of $\A$ are fixed points of $\A^{N}$,
so one simply takes $\a_{m}=Nm$ to get rid of the extra phases. In
this context, one can think of $\A$ as being an ``asymptotic''
root of $\ppp$ \cite{Bhat2018}. Other $\la$ which are not rational
multiples of $2\pi$ can similarly be removed to arbitrary accuracy
\cite{robin,Wolf2010b,wolf2010} by remembering that irrational numbers
are limits of sequences of rationals. The above limit is a direct
generalization of the large time limit of Lindbladian channels $\A_{t}=e^{t\L}$
for some Lindbladian $\L$. However, in that case, $\lim_{t\rightarrow\infty}e^{t\L}$
produces residual unitary evolution which is not so easily removed
by clever manipulation of the limit.

The asymptotic projection is expressible in terms of (superoperator)
projections onto the eigenspaces of the rotating points,
\begin{equation}
\ppp(\r)=\sum_{\la,\m}\st_{\la\m}\tr\left\{ J^{\la\m\dg}(\r)\right\} \,,\label{eq:ap}
\end{equation}
where the rotating points are indexed by their eigenvalue $e^{i\la}$
and $\m$ counts any degeneracies for each $\la$. In that sense,
conserved quantities are as important as fixed points despite being
less well-understood. Conveniently, the rotating points and their
corresponding conserved quantities can be made biorthogonal, $\tr\{J^{\la\m\dg}\st_{\varTheta\n}\}=\d_{\la\varTheta}\d_{\m\n}$.
The $\st$'s thus determine the basis elements of a generalized Bloch
vector \cite{alicki_book,schirmer} of the asymptotic state $\ppp(\r)$
while the $J$'s determine the coefficients of said Bloch vector.
The biorthogonality condition easily implies that $\ppp$ is really
a projection --- $\ppp^{2}=\ppp$.

The asymptotic projection for a channel with unique fixed point acts
as $\ppp(\r)=\st\tr\{\r\}=\st$. Channels with more non-trivial $\ppp$
are therefore those with multiple fixed or rotating points. As a simple
example of such a channel, consider $\A=\{A\}$ acting on $2\times2$
matrices with one Kraus operator $A=\text{diag}\{1,e^{i\t}\}$. Such
a channel sports two fixed points, the identity and the Pauli matrix
$Z$, and two rotating points $\s_{\pm}$ with eigenvalues $\la=\pm\t$.
In fact, since there is only one Kraus operator, such a channel is
actually unitary. For a non-unitary example, set $\t=\pi$ (so $A=Z$)
and add the Pauli matrix $X$ as another Kraus operator {[}normalizing
both $A$'s by $\frac{1}{\sqrt{2}}$ to satisfy trace preservation
(\ref{eq:cptp}){]}. This channel has the identity as the unique fixed
point and $Y$ as the only rotating point with $\la=\pi$. Since both
Kraus operators are Hermitian, the left and right fixed points are
the same; we will see examples when they are not later. Other examples
of $\ppp$ come from recovery maps in quantum error-correction, which
take a state which has undergone an error and project it back into
the protected subspace of the quantum code \cite{Ippoliti2014}.

\section{Faithful channels\label{sec:Structure-of-conserved}}

This part focuses on channels that do not contain a decaying subspace.
This means that no populations $|\psi\ket\bra\psi|$ decay completely
to zero under many applications of the channel: $\bra\psi|\R_{\E}(|\psi\ket\bra\psi|)|\psi\ket\neq0$
for all states $|\psi\ket$, a channel $\E$, and its asymptotic projection
$\R_{\E}$. Equivalently, the channel has to have one fixed point
$\r$ which is of full rank ($\bra\psi|\r|\psi\ket>0$ for all $|\psi\ket$).
The structural differences between such channels and channels which
do admit decay warrant a special definition:
\begin{defn*}
A channel $\E\equiv\{E_{\ell}\}$ is \textit{faithful }if it admits
a full-rank (i.e., faithful) fixed point $\r$. In other words,
\begin{equation}
\exists\,\r>0\,\,\text{ such that }\,\,\E(\r)=\r\,.
\end{equation}
\end{defn*}
Here, I always use $\E$ to denote faithful channels and later show
how $\E$ can be extended to channels $\A$ which act on a larger
Hilbert space and admit a decaying subspace. In this sense, $\E$
is the faithful channel of $\A$. Note that the number of fixed points
is independent of this condition, and Table \ref{tab:Some-types-of-channels}
relates this definition to others.

\begin{table}
{\scriptsize

\begin{tabular}{>{\centering}p{2.3cm}>{\centering}p{1.5cm}>{\centering}p{1.5cm}>{\centering}p{1.3cm}}
\toprule 
 & Fixed point unique? & $\exists$ full-rank fixed point? & $\exists$ rotating point?\tabularnewline
\midrule
\midrule 
ergodic \cite{Oseledets1984,Raginsky2002,Raginsky2002a,Burgarth2007} & Yes &  & \tabularnewline
faithful {[}here{]} &  & Yes & \tabularnewline
irreducible \cite{Davies1970,wolf2010} & Yes & Yes & \tabularnewline
mixing \cite{Burgarth2007} & Yes &  & No\tabularnewline
primitive \cite{Sanz2010,Wolf2010b} & Yes & Yes & No\tabularnewline
\bottomrule
\end{tabular}

}

\caption{\label{tab:Some-types-of-channels}Various types of channels. A blank
entry means there is no requirement for that definition. For Lindbladians,
mixing is also known as relaxing \cite{Burgarth2013a} and faithful
is also known as minimal \cite{ABFJ}. Primitive is equivalent to
strongly irreducible \cite{Sanz2010} and irreducible and aperiodic
\cite{Guan2018a}.}
\end{table}

The first statement is regarding the relationship between the conserved
quantities $J$ and the Kraus of operators of $\E$. It is a generalization
of a theorem for fixed points of faithful channels \cite{robin,Kribs2003,Choi2006,Gheondea2016,Gheondea2018},
which states that a conserved quantity $J$ with eigenvalue $\la=0$
commutes with all of the Kraus operators. It is shown that conserved
quantities with $\la\neq0$ commute up to a phase. For the aforementioned
example $\E=\{E\}$ with $E=\text{diag}\{1,e^{i\t}\}$, the conserved
quantity $\s_{+}$ satisfies $\s_{+}E=e^{-i\t}E\s_{+}$. This turns
out to be true for all faithful channels. We state the result below,
relegating all proofs for the appendix.

\begin{numtheorem}{\hyperref[prop:1]{Proposition 1.}}

Let $\E=\left\{ E_{\ell}\right\} $ be a faithful channel. Let $J$
be a conserved quantity of $\E$, i.e., 
\begin{equation}
\E^{\dgt}\left(J\right)=e^{-i\la}J
\end{equation}
 for some real $\la$. Then, for all $\ell$,
\begin{equation}
JE_{\ell}=e^{-i\la}E_{\ell}J\,.\label{eq:com}
\end{equation}

\end{numtheorem}

Next, I show that this formula imposes some restrictions on the peripheral
spectrum of $\E$, and that unitary $J$ can be thought of as symmetries
in a limited extension of Noether's theorem.

\subsection{Spectral restrictions\label{subsec:Spectral-restrictions}}

Assuming $\E^{\dgt}(J_{1})=e^{-i\la{}_{1}}J_{1}$ and $\E^{\dgt}(J_{2})=e^{-i\la{}_{2}}J_{2}$,
Eq.~(\ref{eq:com}) implies that $\E^{\dgt}(J_{1}J_{2})=e^{-i(\la{}_{1}+\la_{2})}J_{1}J_{2}$.
Combined with the fact that there must be $\leq D^{2}$ conserved
quantities, there are sometimes constraints on $\la$ such that there
remain finitely many eigenvalues. Thus, I suggest that we divide conserved
quantities into two types: nilpotent ones $\jn$ and diagonalizable
ones $\jd$. Since $\jn^{N}=0$ for $N\leq D$, the dimension of the
Hilbert space, there are no restrictions on their eigenvalue. For
a unitary channel $\E=\{U\}$, such conserved quantities are the coherences
$|k\ket\bra k^{\pr}|$ between eigenstates $|k\ket\neq|k^{\pr}\ket$
of $U$. On the other hand, projections $|k\ket\bra k|$ onto eigenstates
of $U$ are examples of $\jd$ with $\la=0$. While this is the only
possible value for $\jd$ of unitary channels, general channels admit
special $\jd$ with $\la\neq0$. For example, the channel $\E=\{X/\sqrt{2},Z/\sqrt{2}\}$
admits conserved quantities $\jd\in\{I,Y\}$ with eigenvalues $\pm1$,
respectively. More generally, it turns out these special $\jd$ can
have only $N$th root-of-unity eigenvalues, with $N$ tightly bounded
by the dimension of the range of $\jd$.

\begin{numtheorem}{\hyperref[prop:2]{Proposition 2.}}

Let $\E=\left\{ E_{\ell}\right\} $ be a faithful channel. Let $\jd$
be such that $\E^{\dgt}(\jd)=e^{-i\la}\jd$ for some real $\la$ and
assume $\jd$ is diagonalizable. Then, there exists an integer $n$
such that 
\begin{equation}
\la=\frac{2\pi}{N}n\,\,\,\,\,\,\,\text{for some}\,\,\,\,\,\,\,N\leq\left\Vert \jd\right\Vert _{1}\,,
\end{equation}
where $\left\Vert \,\,\,\right\Vert _{1}$ is the trace norm.

\end{numtheorem}

Splitting the set of $\jd$ into the ordinary $\la=0$ case and the
above special case, one can classify conserved quantities and their
corresponding rotating points (it is implied that $\la\neq0$ in the
latter two types):

\begin{enumerate}[I]

\item Ordinary conserved quantities: $\la=0$

\item Nilpotent conserved quantities: $\la\in\mathbb{R}$

\item Diagonalizable conserved quantities: $\la=\frac{2\pi}{N}n$.

\end{enumerate}

Types I and II exist for Lindbladian evolutions, but type III is unique
to general channels. Type III quantities affect the block-decomposition
of rotating points into noiseless factors; this is addressed in Sec.~\ref{sec:Application:-information-preserv}.

\subsection{Limited Noether-type theorem}

Let us assume a unitary conserved quantity, $J^{\dg}J=JJ^{\dg}=I$,
and show that the above two propositions extend known results (\cite{wolf2010},
Prop. 6.7) from irreducible to faithful channels. Proposition \ref{prop:1}
readily implies that $\E$ is covariant (more specifically, invariant
or symmetric) under $J$,
\begin{equation}
J\E(\r)J^{\dg}=\E(J\r J^{\dg})\,\,\,\,\,\,\forall\r\,,
\end{equation}
so conserved quantities are symmetries of the channel. Such quantities
are also important in determining how the set of channels decomposes
into path-connected subsets \cite[Thm. 5]{Szehr2016} (see also \cite{Bachmann2012a}).
Proposition \ref{prop:2} implies that $J^{N\leq D}=I$, so the set
$\{J^{n}\}_{n=0}^{N-1}$ forms the symmetry group $\Z_{N}$. Note
that the symmetry group is never infinite for finite dimension $D$.
Generalizing this, the set of unitary conserved quantities thus forms
a finite group under which $\E$ is covariant. This is a one-way Noether-type
theorem linking conserved quantities to symmetries (see Refs. \cite{pub011,Gough2015}
or Ref. \cite{thesis}, Ch. 2.6, for the Lindbladian analogue). Note
that this generalization is in a different direction from Ref.~\cite{Marvian2014},
which introduces measures quantifying the extent to which a quantum
state breaks a symmetry. This one-way theorem cannot be extended to
a two-way theorem because symmetries of a channel are not always conserved
quantities. A simple counterexample is the channel $\E=\{X/\sqrt{2},Z/\sqrt{2}\}$,
for which the Hadamard operation $H$ taking $X\leftrightarrow Z$
is a symmetry, but is not conserved {[}$\E^{\dgt}(H)=0${]}.

\section{Extending to general channels\label{sec:Extending-to-general}}

Now let us extend faithful channels to channels which do not contain
a full-rank fixed point. While Props. \ref{prop:1}-\ref{prop:2}
break down for general channels, the extension below implies that,
for every general channel, there is a corresponding faithful channel
for which they hold.

Any faithful channel $\E=\left\{ E_{\ell}\right\} $ can be extended
to a channel $\A=\left\{ A^{\ell}\right\} $ which contains a decaying
subspace (also, transient subspace \cite{ying2013}). Specifically,
the Kraus operators of $\A$ are
\begin{equation}
A^{\ell}=\begin{pmatrix}E_{\ell}\vspace{4pt} & A_{\ur}^{\ell}\\
0 & A_{\lr}^{\ell}
\end{pmatrix}\equiv\begin{pmatrix}A_{\ul}^{\ell}\vspace{4pt} & A_{\ur}^{\ell}\\
0 & A_{\lr}^{\ell}
\end{pmatrix}\,.\label{eq:bl}
\end{equation}
The zero is necessary by assumption: for $\E$ to act on the largest
invariant subspace, $\A$ cannot take states out of that subspace.
The dimensions of the square matrices $E_{\ell}$ and $A_{\lr}^{\ell}$
can differ, and the bounds of $\ell$ can change by padding the same
$E$ with two different pairs of matrices in $\urbig$ (``upper right'')
and $\lrbig$ (``lower right'') to make two different $A$'s. The
zero matrix in $\llbig$ is necessary to make sure that $\ulbig$
is the largest invariant subspace; thus, all rotating points of $\A$
are the same as those of $\E$. In addition, $\A$ needs to be a legitimate
channel, i.e., satisfy eq. (\ref{eq:cptp}). Writing out the $A^{\ell}$'s
in blocks {[}as in eq. (\ref{eq:bl}){]} yields the conditions\begin{subequations}\label{eq:conds}
\begin{align}
\sum_{\ell}A_{\ul}^{\ell\dg}A_{\ul}^{\ell} & =\pp\label{eq:conds1}\\
\sum_{\ell}A_{\ul}^{\ell\dg}A_{\ur}^{\ell} & =0\label{eq:conds2}\\
\sum_{\ell}(A_{\ur}^{\ell})^{\dg}A_{\ur}^{\ell}+A_{\lr}^{\ell\dg}A_{\lr}^{\ell} & =\qq\,,
\end{align}
\end{subequations}where $\qq$ is the projection on $\lrbig$ and
$\pp=I-\qq$ is the projection onto $\ulbig$ (with $\tr\left\{ P\right\} \equiv D$).
For each faithful channel $\E$, there are an infinite number of possible
extensions $\A$. Conversely, an arbitrary channel $\A$ either is
a faithful channel or contains one, i.e., once the largest invariant
subspace $\ulbig$ is determined (via, say, an algorithm \cite{Cirillo2015}),
one will obtain the block decomposition (\ref{eq:bl}). The remaining
two completely positive maps associated with this decomposition of
$\A$, $\{A_{\ur}^{\ell}\}$ and $\{A_{\lr}^{\ell}\}$, are both trace-decreasing.

Now let us develop the required notation. Just like $\pp$ and $\qq$
split the Hilbert space into two parts, they can be used to split
the space of operators on a Hilbert space into four ``corners''
$\{\ulbig,\urbig,\llbig,\lrbig\}$ \cite{ABFJ}. Each of the four
corners corresponds to its own superoperator projection. For example,
\begin{equation}
\R_{\ur}(O)\equiv\pp O\qq\equiv O_{\ur}
\end{equation}
for any operator $O$. The other three projections are defined accordingly.
One can graphically determine which corner a product of operators
belongs to by multiplying their blocks as matrices (e.g., $A_{\ll}B_{\ur}\in\lrbig$).
Moreover, the four-corners projections add graphically ($\R_{\ul}+\R_{\lr}\equiv\R_{\di}$)
and are Hermitian ($\R_{\emp}^{\dgt}=\R_{\emp}$). Analogous to studying
operators in terms of their matrix elements, one can study superoperators
in terms of their four-corners decomposition. For example,
\begin{equation}
\R_{\ul}\A\R_{\lr}(\r)=\pp\A\left(\qq\r\qq\right)\pp=\sum_{\ell}A_{\ur}^{\ell}\r_{\lr}(A_{\ur}^{\ell})^{\dg}\label{eq:tr}
\end{equation}
is the map $\{A_{\ur}^{\ell}\}$ which transfers $\r_{\lr}$ from
$\lrbig$ to $\ulbig$. ``Diagonal'' elements are denoted as $\A_{\emp}\equiv\R_{\emp}\A\R_{\emp}$
for convenience, so the faithful channel $\E\equiv\R_{\ul}\A\R_{\ul}$
and similarly $\{A_{\lr}^{\ell}\}\equiv\R_{\lr}\A\R_{\lr}$.

With conditions (\ref{eq:bl}) and (\ref{eq:conds}), $\A$ contains
a decaying subspace of dimension $\tr\left\{ \qq\right\} $ and the
same rotating points as $\E$. But what about the conserved quantities?
Those are not the same because, by trace preservation, they need to
make sure that all state populations (and sometimes some coherences)
in $\lrbig$ are transferred to $\ulbig$. For example, the identity
is (always) a conserved quantity of $\A$, but the analogous conserved
quantity of $\E$ is $\pp$. Denoting the conserved quantities of
$\E$ as $J_{\ul}$, it will now be shown how to extend them to form
$J$, the conserved quantities of $\A$. Having defined this notation,
it is easy to write out the conserved quantities of the extended channel
$\A$. 

\begin{numtheorem}{\hyperref[prop:3]{Proposition 3.}}

The conserved quantities of $\A$ corresponding to eigenvalues $e^{i\la}$
are
\begin{equation}
J=J_{\ul}+J_{\lr}=J_{\ul}+\RR_{\lr}^{(\la)\dgt}\A^{\dgt}(J_{\ul})\,,\label{eq:main}
\end{equation}
where $J_{\ul}$ are the conserved quantities of $\A_{\ul}=\E$ and
\begin{equation}
\RR_{\lr}^{(\la)}=-\left(\A-e^{i\la}\right)_{\lr}^{-1}\,.\label{eq:resolvent}
\end{equation}

\end{numtheorem}

An important corollary of the above proposition is that $J_{\of}=0$.
After plugging in this formula for $J$ into $\ppp$ (\ref{eq:ap}),
this means that the asymptotic projection has only two pieces:
\begin{equation}
\ppp=\R_{\ul}\ppp\R_{\di}\equiv\ps+\ppp\R_{\lr}\,,\label{eq:ppp}
\end{equation}
where the \textit{faithful projection} (for Lindbladians, minimal
projection \cite{ABFJ})
\begin{equation}
\ps(\cdot)\equiv\ppp\R_{\ul}(\cdot)=\sum_{\la,\m}\St_{\la\m}\tr\{J_{\ul}^{\la\m\dg}\cdot\}
\end{equation}
is the asymptotic projection of the faithful channel $\E$. The piece
$\ps$ is responsible for preserving parts of an initial state $\r$
which is in $\ulbig$ while the piece $\ppp\R_{\lr}$ is a channel
mapping states from $\lrbig$ onto the subspaces spanned by the rotating
points of $\A$, all located in $\ulbig$. The key result here is
that the rotation induced by $\la$, besides inducing phases on the
rotating points, also contributes to the decay of information from
$\lrbig$ into $\ulbig$. Namely, the inverse of the resolvent (\ref{eq:resolvent})
modulates the decoherence induced during the decay in a way that depends
on how close the eigenvalues of $\A_{\lr}$ are to the phases $e^{i\la}$.
The $\la=0$ case reduces to known results (\cite{robin}, Lemma 5.8;
\cite{Cirillo2015}, Prop. 7),
\begin{equation}
\ppp\R_{\lr}=\ps\A(\id-\A)_{\lr}^{-1}\,,
\end{equation}
where the resolvent can be thought of as the quantum version of the
fundamental matrix from classical Markov chains \cite{markov_book}.
These formulas also reduce to the Lindbladian result (\cite{ABFJ},
Prop. 3) if we let $\A=e^{\L}\rightarrow\id+\L$ for some Lindbladian
$\L$ and $e^{-i\la}\rightarrow1-i\la$. In the Lindblad case, some
dependence on $\la$ can be canceled by properly tuning $\L_{\lr}$
(\cite{thesis}, Sec. 3.2.3).

Directly applying Prop.~\ref{prop:3} allows us to find the asymptotic
\cite{Cirillo2015} (also, reachability \cite{ying2013}) probabilities
of a given initial state $\r$ to reach a particular subspace of $\ulbig$.
The new result here is determination of the \textit{coherences} reached
by $\r$, assuming knowledge of the left ($J_{\ul}^{\la\m}$) and
right ($\St_{\la\m}$) rotating points of $\E$. To determine the
coefficient $c_{\la\m}$ in the asymptotic state $\ppp(\r)=\sum_{\la\m}c_{\la\m}\St_{\la\m}$
(\ref{eq:ap}), instead of applying $\A$ a sufficiently large number
of times to determine $\ppp$, simply calculate
\begin{equation}
c_{\la\m}=\tr\left\{ J_{\ul}^{\la\m\dg}\r_{\ul}\right\} +\sum_{\ell}\tr\left\{ A_{\ur}^{\ell\dg}J_{\ul}^{\la\m\dg}A_{\ur}^{\ell}\RR_{\lr}^{(\la)}(\r_{\lr})\right\} ,
\end{equation}
where I used eq.~(\ref{eq:tr}) and cyclic invariance.

\section{Adding fixed-point structure\label{sec:Application:-information-preserv}}

We have yet to consider the fine-grained structure of the asymptotic
subspace. This section builds up the most general asymptotic subspace
from its minimal ingredients: channels admitting decoherence-free
subspaces and irreducible channels. The latter can have Type III (see
Sec.~\ref{subsec:Spectral-restrictions}) rotating points in the
general case, making the usual noiseless subsystem decomposition significantly
different from the Lindbladian or unitary cases. This section concludes
with an algorithm that outputs a properly organized $\ppp$ given
a channel $\A$.

\subsection{Decoherence-free subspaces\label{subsec:Decoherence-free-subspaces}}

Let us assume that now all of $\ulbig$ consists of rotating or fixed
points, so $\A_{\ul}=\E$ is a unitary channel. An example of this
case is $\A_{\ul}=\left\{ E\right\} $, where $A_{\ul}=E=\text{diag}\{1,e^{i\t}\}$
is the Kraus operator that mentioned before. The necessary and sufficient
condition on the $A$'s (\ref{eq:bl}) for this to hold is
\begin{equation}
A_{\ul}^{\ell}=a_{\ell}U\label{eq:dfs}
\end{equation}
for some unitary $U$, real $a_{\ell}$, and such that $\sum_{\ell}|a_{\ell}|^{2}=1$
to satisfy the condition (\ref{eq:conds1}). Since there is no decay
in $\ulbig$, that portion forms a \textit{decoherence-free subspace}
(DFS) \cite{Lidar1998} and $\ps=\R_{\ul}$. The form of $A_{\ul}$
also implies that $\R_{\ul}\A\R_{\of}=0$. The rotating points and
conserved quantities of $\E$---outer products of eigenstates of
$U$---are of course the same.

Adding in decay, the form (\ref{eq:bl}) of $A$ with the above restriction
on $A_{\ul}$ generalizes the previous DFS condition from eq. (11)
of Ref. \cite{lidar2003} (see also Refs.~\cite{Karasik2008,Kamizawa2018}
for different formulations). The difference is that now $A_{\ur}$
does not have to be zero, so information from $\lrbig$ flows into
the DFS $\ulbig$. For example, in quantum error-correction, $\ulbig$
is the logical subspace, $\lrbig$ is the orthogonal error subspace,
and the piece $\ppp\R_{\lr}$ plays the role of a ``recovery channel''
which attempts to recover the leaked information after an error \cite{Ippoliti2014}.
It turns out one can remove the inverse term from $\ppp\R_{\lr}$,
putting the piece in Kraus form. Setting $A_{\lr}=0$ and $A_{\ul}=\pp$
(unitary evolution within DFS is trivial) eliminates $\A_{\lr}$ and
reduces $\ppp\R_{\lr}$ to the transfer map (\ref{eq:tr}), 
\begin{equation}
\ppp\R_{\lr}=\R_{\ul}\A\R_{\lr}\,,\label{eq:arb}
\end{equation}
with Kraus operators $A_{\ur}$. Condition (\ref{eq:conds2}) on $A_{\ur}$
reduces to $\sum_{\ell}A_{\ur}^{\ell}=0$, which is automatically
satisfied by the set of operators $\{\pm A_{\ur}^{\ell}/\sqrt{2}\}$.
However, the channel created by those operators is the same as $\{A_{\ur}^{\ell}\}$,
so $\ppp$ embeds an arbitrary recovery channel from the error subspace
$\lrbig$ to code subspace $\ulbig$.

\subsection{Irreducible channels\label{subsec:Irreducible-channels}}

On the opposite side from DFS channels are the \textit{irreducible}
channels---channels admitting a unique full-rank fixed point. We
find $\ppp$ for such channels below, including the presence of any
additional rotating points---a case unique to non-Lindbladian channels.

For Lindbladian channels, there is only one asymptotic state (\ref{eq:ap})
$\ppp(\r)=\varrho$, the unique full-rank (on $\ulbig$) fixed point
of the channel. For general channels, this situation is complicated
by any Type III rotating points. An example is the channel $\E=\A_{\ul}=\{X/\sqrt{2},Y/\sqrt{2}\}$,
admitting two conserved quantities $\{I,Z\}$. Due to Prop.~\ref{prop:2},
all conserved quantities $J_{\ul}$ for irreducible channels are unitary
(on $\ulbig$) and their corresponding rotating points are simply
$\St=J_{\ul}\varrho$. Moreover, conserved quantities are powers of
a single unitary 
\begin{equation}
J_{\ul}=\sum_{n\in\Z_{N}}\o^{n}\varPi_{n}\,,\label{eq:irreducible-J}
\end{equation}
where $\o=e^{i\frac{2\pi}{N}}$ is this quantity's eigenvalue (for
some $N\leq D$, the total dimension) and $\varPi_{n}$ is the projection
onto the eigenspace of eigenvalue $\o^{n}$ {[}\citealp{wolf2010},
Thm.~6.6; see also proof of Prop.~\ref{prop:2}{]}. We can thus
represent them as $J_{\ul}^{\m}$, where $\m\in\Z_{N}$ is literally
the $\m$th power of $J_{\ul}$. This $J_{\ul}$ must commute with
$\varrho$; otherwise, $J_{\ul}\varrho J_{\ul}^{\dg}$ would be a
different fixed point, violating irreducibility of $\E$. 

Biorthogonality between the rotating points and conserved quantities
with different eigenvalues and $\tr\{\varrho\}=1$ imply that the
following must be true:
\begin{equation}
\tr\{J_{\ul}^{\m\dg}\st_{\n}\}=\sum_{n\in\Z_{N}}\o^{\left(\n-\m\right)n}\tr\left\{ \varPi_{n}\varrho\right\} =\d_{\m\n}\,.\label{eq:ortho}
\end{equation}
Satisfaction of these $N$ equations forces $\tr\{\varPi_{n}\varrho\}=\nicefrac{1}{N}$.
A striking fact is that the $\st$'s themselves are no longer orthonormal
under the Hilbert-Schmidt inner product, $\tr\{\st_{\m}^{\dg}\st_{\n}\}\not\propto\d_{\m\n}$.
This was the case for Lindbladian channels \cite{ABFJ}, and a simple
counterexample\footnote{Let $\E=\{|2\ket\bra0|,|2\ket\bra1|,{\textstyle \sqrt{\frac{2}{3}}}|1\ket\bra2|,{\textstyle \sqrt{\frac{1}{3}}}|0\ket\bra2|\}$
act on $\{|0\ket,|1\ket,|2\ket\}$. This channel has a fixed point
$\varrho=\text{diag}\{1/6,1/3,1/2\}$, and $J=\text{diag}\{1,1,-1\}$.
Eigenmatrices $\varrho$ and $J\varrho$ are not orthogonal.} shows this is not true generally. 

There is an alternative characterization \cite{groh,Wolf2010b} of
the subspace spanned by the rotating points, popularized by Wolf \cite[Thm. 6.16]{wolf2010}.
Instead of using $\st_{\m}$, one uses the basis 
\begin{equation}
\stt_{n}\equiv\varPi_{n}\varrho=\varPi_{n}\varrho\varPi_{n}
\end{equation}
of $\varrho$ projected onto eigenspaces of $J$. For our example
$\E$, this basis is $\stt_{0}=|0\ket\bra0|$ and $\stt_{1}=|1\ket\bra1|$,
and $\E$ permutes one with the other. This characterization allows
one to decompose the space of rotating points into finer blocks (in
the example, two 1-by-1 blocks). The $\stt_{n}$ are not eigenmatrices---they
are permuted by $\E$ (see proof of Prop.~\ref{prop:2})---but they
are fixed points of $\E^{N}$. However, unlike $\st_{\m}$, this basis
is orthogonal.

As for the asymptotic state (\ref{eq:ap}), despite there being only
one \textit{fixed} point, the rotating points also have to be included:
\begin{align}
\ps(\cdot) & =\sum_{\m\in\Z_{N}}\tr\{J_{\ul}^{\m\dg}(\cdot)\}J_{\ul}^{\m}\varrho=N\sum_{n\in\Z_{N}}\tr\{\varPi_{n}(\cdot)\}\varPi_{n}\varrho\,,\label{eq:asymp-irreducible}
\end{align}
where I plugged in eq.~(\ref{eq:irreducible-J}) and simplified.
Extending from $\ps$ to the full $\ppp$ just substitutes $J_{\di}^{\m}$
for the first $J_{\ul}^{\m}$, with $J_{\lr}^{\m}$ determined by
Prop.~\ref{prop:3}.

\subsection{Noiseless subsystems}

The DFS case in Sec.~\ref{subsec:Decoherence-free-subspaces} and
the irreducible case in Sec.~\ref{subsec:Irreducible-channels} are
the two building blocks out of which one constructs the most general
rotating point structures. Focusing on the $\ulbig$ space, a tensor
product of a unitary channel and an irreducible channel $\{R_{\ell}\}$
has Kraus operators (\ref{eq:bl}) \cite{Guan2018}
\begin{equation}
A_{\ul}^{\ell}=U\otimes R_{\ell}\,.
\end{equation}
In Lindbladians, such a combination of a DFS and an auxiliary subspace
with a unique fixed point is collectively called a \textit{noiseless
subsystem (NS)} \cite{Knill2000}. Its asymptotic state would then
be expressible in the matrix basis $|k\ket\bra k^{\pr}|\ot\varrho$,
where $|k\ket,|k^{\pr}\ket$ are eigenstates of $U$ with eigenvalues
$\l_{k},\l_{k^{\pr}}$ and $\varrho$ is the unique fixed point of
$\{R_{\ell}\}$ (e.g.,~\cite{pub011}). However, the Type III rotating
points of $\{R_{\ell}\}$ necessitate another index in the general
case. Now, the rotating points are $\St_{kk^{\pr}\m}=|k\ket\bra k^{\pr}|\ot J^{\m}\varrho$,
satisfying the eigenvalue equation
\begin{equation}
\A(\St_{kk^{\pr}\m})=e^{i(\l_{k}-\l_{k^{\pr}}+\frac{2\pi}{N}\m)}\St_{kk^{\pr}\m}\,.
\end{equation}
This makes contact with {[}\citealp{Wolf2010b}, Thm.~9{]}, showing
that the most general peripheral eigenvalues are combinations of eigenvalues
of a unitary $U$ and a root of unity stemming from a Type III rotating
point. It is not generally possible to make an orthonormal basis of
rotating points when the irreducible factor admits rotating points
(see previous subsection).

Noiseless blocks can further be stacked to form the most general asymptotic
subspace, corresponding to Kraus operators
\begin{equation}
A_{\ul}^{\ell}=\bigoplus_{\varkappa}U_{\varkappa}\otimes R_{\ell,\varkappa}\,,\label{eq:decomp}
\end{equation}
where $U_{\varkappa}$ is unitary and $\{R_{\ell,\varkappa}\}$ for
each $\varkappa$ is irreducible \cite{Guan2018}. This blocks-of-factors
structure or \textit{shape} of $A_{\ul}^{\ell}$ is the most general
form of an information-preserving structure \cite{robin}. Rotating
points and conserved quantities of the form from the previous subsection
can be constructed for each block to form a \textit{canonical basis}
(i.e., a basis respecting the block structure). It is well-known among
experts (see, e.g., \cite{Rahaman2017}) that the conserved quantities
$\{J_{\ul}\}$ form a \textit{matrix algebra} --- a vector space
(where the vectors are matrices) that is closed under multiplication
and the conjugate transpose operation. It is important to keep in
mind that all of this extra structure in $\ulbig$ does not put any
constraints on the remaining parts $\{A_{\ur},A_{\lr}\}$ of $\A$,
the extension of $\E$ (as long as eqs.~(\ref{eq:conds}) are satisfied).
The extended conserved quantities $J$ do \textit{not} have to form
a matrix algebra.

\subsection{How to find $\protect\ppp$\label{subsec:Algorithm-for-finding}}

There exist several algorithms to determine the shape (\ref{eq:decomp})
of $\A$ \cite{robin,Holbrook2003,Choi2006,Knill2006,Maehara2010,Wang2013,Guan2018}.
A straightforward way \cite{robin} to find the form (\ref{eq:decomp})
for a general channel $\A$ is to diagonalize $\A$ and apply standard
matrix algebra techniques \cite{Holbrook2003,Maehara2010} to find
a canonical basis for the algebra of conserved quantities in $\ulbig$.
Using Prop. \ref{prop:3}, I slightly extend the algorithm from Ref.
\cite{robin} to one that finds and organizes not just the conserved
quantities restricted to $\ulbig$, but the full conserved quantities
as well. Once again, the main new inclusion is the determination of
conserved quantities whose eigenvalue is modulus one (as opposed to
exactly one).
\begin{lyxalgorithm*}
Finding and organizing $\ppp$

Find the rotating points $\St$ and conserved quantities $J$ by diagonalizing
$\A$

Construct $\ppp$ and $\pp$, the projection onto $\textnormal{range}\{\ppp(I)\}$

Find the projected conserved quantities $J_{\ul}\equiv\pp J\pp$

Decompose the algebra spanned by $J_{\ul}$ into canonical form using,
e.g., Refs.~\cite{Holbrook2003,Maehara2010}

Determine a canonical basis $\St$ for the rotating points and $J_{\ul}$
for the conserved quantities

Extend $J_{\ul}$ to $J$ via Prop. \ref{prop:3}.
\end{lyxalgorithm*}
Note that $\ulbig$ is the range of $\ppp(I)$, i.e., $\ppp(I)\propto\pp$,
because $I$ is dual to the maximally mixed fixed point $\frac{1}{\tr\{\pp\}}\pp$
and is the only conserved quantity with nonzero trace.

\section{Application: Adiabaticity\label{sec:Adiabaticity-of-Kraus}}

In this section, an adiabatic limit is derived for continuously parameterized
families of channels admitting only fixed points. It is shown that
the spectral gap of $\A_{\lr}$ can be closed for any segment of the
adiabatic path without affecting the adiabatic evolution of the fixed
points. The only gap that must remain open is that of $\A_{\of}$.

\subsection{From channels to Lindbladians}

Consider a continuously parameterized family of channels $\A^{(s)}$
with $s\in[0,1]$ and with no rotating points. Instead of thinking
of these as repeated applications of random channels \cite{Nechita2012,Bruneau2014},
we assume that $\A^{(s)}$ is smoothly varying. Starting with an initially
fixed-point state of $\rout=\A^{(0)}(\rout)$, we determine how this
state evolves under the map
\begin{equation}
\T(\rout)\equiv\lim_{N\rightarrow\infty}\A^{(1)}\cdots\A^{(\frac{1}{N})}\A^{(\frac{0}{N})}(\rout)\,.\label{eq:krausadia}
\end{equation}
This is a discrete channel version of the usual quantum-mechanical
adiabatic limit \cite{Born1928,Kato1950}, where the generator of
unitary time evolution is slowly varied during evolution. Those familiar
with unitary quantum mechanics would expect that, upon evolution in
a closed path $\A^{(1)}=\A^{(0)}$, one remains in the fixed-point
subspace of $\A^{(s)}$ and returns to the same state up to a geometric
phase \cite{Berry1984} or unitary matrix \cite{Wilczek1984} operation.
Such a limit has indeed been extended to fixed-point subspaces of
Lindbladians (e.g., {[}\citealp{Avron2012b},~Thm.~2.6{]}), and
the few simple manipulations below will convert the above limit of
discrete maps into this case. However, note that initializing in a
subspace other than the one of slowest decay is not as straightforward
\cite{Nenciu1992,Berry2011,Uzdin2011,Holler2019}.

To reduce the above, we use the calculus of differences \cite{concrete}.
For $n\in\{0,1,\cdots,N$\}, define the map $\r^{(\frac{n+1}{N})}=\A^{(\frac{n+1}{N})}(\r^{(\frac{n}{N})})$.
Taking the discrete derivative yields
\begin{eqnarray}
\!\!\!\!\!\!\!\!\!d\r^{(n)} & \equiv & \frac{\r^{(\frac{n+1}{N})}-\r^{(\frac{n}{N})}}{1/N}=N\left(\A^{(\frac{n+1}{N})}-\id\right)(\r^{(\frac{n}{N})})\,.
\end{eqnarray}
At this point, one can proceed to derive a discrete version of the
adiabatic expansion of Ref.~\cite{Avron2012b}. However, it is simpler
to take the continuum limit directly, yielding the evolution equation
$\frac{1}{N}\frac{d\r^{(s)}}{ds}=\L^{(s)}(\r^{(s)})$, generated by
the Lindbladian 
\begin{equation}
\L^{(s)}=\A^{(s)}-\id\,.
\end{equation}
Therefore, adiabaticity of Kraus map families $\A^{(s)}$ is solely
determined by their corresponding Lindbladians $\L^{(s)}$. Using
the aforementioned results \cite{Avron2012b}, the asymptotic solution
to eq.~(\ref{eq:krausadia}) is
\begin{equation}
\T(\rout)=\prod_{s\in[0,1]}\ppp^{(s)}(\rout)+O\left(1/N\right)\,,
\end{equation}
where $\prod$ is the time-ordered product of $\ppp^{(s)}$. The evolution
can also be calculated using the Berry connection \cite{Sarandy2006,ABFJ}
\begin{equation}
A_{\a,\m\n}=\tr\left\{ J^{\m\dg}\p_{\a}\st_{\n}\right\} \,,\label{eq:connect}
\end{equation}
where one can parameterize $\p_{s}=\sum_{\a}v_{\a}\p_{\a}$ into various
directions $\a$ in parameter space with velocities $v_{\a}$, and
$\{\st,J\}$ are the fixed points and conserved quantities, respectively.
This connection allows us to express the closed-loop adiabatic evolution
as a \textit{holonomy }\cite{Simon1983}
\begin{equation}
\T(\rout)=\mathbb{P}e^{-\sum_{\a}\int d\a A_{\a}}(\rout)+O\left(1/N\right)\,,\label{eq:adiabatic-evolution-A}
\end{equation}
telling us how a state in the instantaneous fixed-point subspace has
rotated as the entire subspace is parallel transported in the parameter
space (with $\mathbb{P}$ denoting a path-ordered integral).

Now let us add in the $\empbig$ structure of $\A$, meaning that
for each $s$ there is an additional instantaneous set of projections
$\R_{\emp}^{(s)}$. Assuming that all fixed points lie in $\ulbig$,
it was shown \cite{ABFJ} that the $O(1/N)$ correction term depends
on the spectrum of $\L_{\of}$ and not of $\L_{\lr}$. Below, we go
further and show that the \textit{dissipative gap} of $\L_{\lr}$---the
nonzero eigenvalue with the smallest real part---can even be closed
without affecting the adiabatic evolution of the fixed points in $\ulbig$.

\begin{figure}
\centering{}\includegraphics[width=0.6\columnwidth]{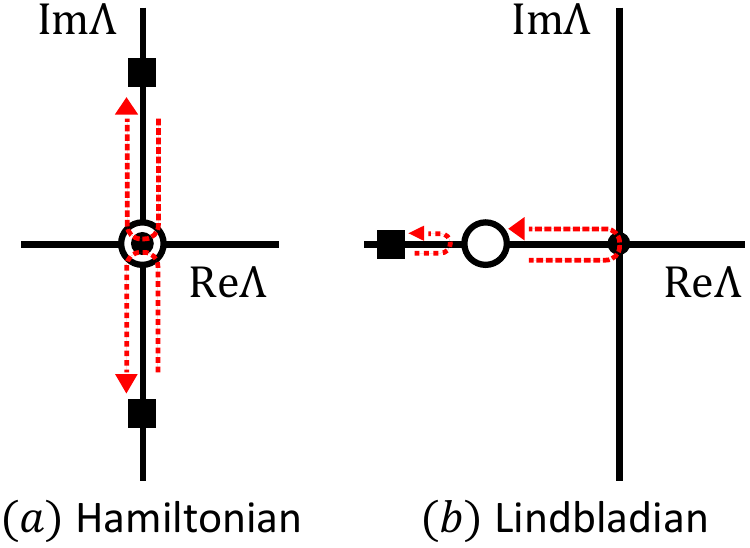}\caption{\label{fig:path}\textbf{(a)} The path $s\in[0,1]$ traced by the
four eigenvalues $\Lambda$ of the superoperator corresponding to
the Hamiltonian $H=\protect\half\cos^{2}\pi s(|0\protect\ket\protect\bra0|-|1\protect\ket\protect\bra1|)$
with eigenstates $|0\protect\ket,|1\protect\ket$. The two populations
$|0\protect\ket\protect\bra0|$ (labeled by $\bullet$) and $|1\protect\ket\protect\bra1|$
($\bigcirc$) are fixed points of the evolution while the two coherences
$|0\protect\ket\protect\bra1|,|1\protect\ket\protect\bra0|$ ($\blacksquare$)
are eigenmatrices with the energy difference as the eigenvalue. As
$|0\protect\ket$ and $|1\protect\ket$ become degenerate at $s=\protect\half$,
the coherence eigenvalues go to zero, and so transitions between $|0\protect\ket$
and $|1\protect\ket$ can occur. \textbf{(b)} A path traced by the
four eigenvalues of the Lindbladian with jump operator (\ref{eq:jump}).
The coherences remain separated from the origin when both $|0\protect\ket\protect\bra0|$
and $|1\protect\ket\protect\bra1|$ are fixed points at $s=\protect\half$,
so transitions between $|0\protect\ket$ and $|1\protect\ket$ are
suppressed.}
\end{figure}

\subsection{The gap can close: a simple example}

Let us first consider a simple example of a channel whose gap in $\lrbig$
closes without affecting adiabatic evolution in $\ulbig$. Let $\L^{(s)}$
be a Lindbladian family with a single jump operator
\begin{equation}
F^{(s)}=\begin{pmatrix}0 & \cos^{2}\pi s\\
0 & 2
\end{pmatrix}\,,\label{eq:jump}
\end{equation}
where we have written the matrix in the $|0\ket,|1\ket$ basis ($\ulbig=|0\ket\bra0|$,
$\urbig=|0\ket\bra1|$, etc.). This system has a fixed point $\st=|0\ket\bra0|$,
two coherences $|0\ket\bra1|,|1\ket\bra0|$ with eigenvalue $-\half(4+\cos^{4}\pi s)$,
and one eigenmatrix $\psi$ with presence in $\lrbig$ and eigenvalue
$-\cos^{4}\pi s$. As $s\rightarrow\half$, the gap of $\L_{\lr}^{(s)}$
closes and $\psi\rightarrow|1\ket\bra1|$ becomes a fixed point. However,
the two coherences do not become fixed, as their eigenvalues are $-2$
at $s=\half$. This could not have been possible in the case of a
Hamiltonian (see Fig.~\ref{fig:path}), where coherences between
two degenerate eigenstates are always fixed points. Since no coherences
are there to facilitate transitions between $|0\ket$ and $|1\ket$,
a state in $|0\ket\bra0|$ remains there even at $s=\half$.

\subsection{General case}

The degeneracy in the example below occurs only at a point, and it
has already been shown that point crossings do not affect the adiabatic
limit \cite{Kato1950,Venuti2015}. Here, I show that the adiabatic
evolution of $\ulbig$ is not affected by $\lrbig$ even if the gap
of $\L_{\lr}^{(s)}$ is closed for a \textit{finite segment} of $s$.
Namely, adiabatic evolution (\ref{eq:adiabatic-evolution-A}) of a
state initially in $\ulbig$ is
\begin{equation}
\T(\rout)=\prod_{s\in[0,1]}\R_{\E}^{(s)}(\rout)+O\left(1/N\right)\,,\label{eq:adiafinal}
\end{equation}
where $\R_{\E}^{(s)}$ is the asymptotic projection of the faithful
Lindbladian
\begin{equation}
\L_{\ul}^{(s)}=(\A^{(s)}-\id)_{\ul}=\E^{(s)}-\R_{\ul}^{(s)}\,.\label{eq:efflind}
\end{equation}
This more general statement holds if, for all $s$, $\L_{\of}^{(s)}$
is gapped and the dimension of the fixed-point subspace located in
$\ulbig$ is constant ($\p_{s}\Tr\{\R_{\E}^{(s)}\}=0$). Thus, the
previous result {[}\citealp{ABFJ},~Eq.~(5.16){]} can be extended
to cases where $\L_{\lr}^{(s)}$ also has fixed points. To show this,
we need to consider two cases: case I (II) analyzing any segment of
$s$ during which the gap of $\L_{\lr}^{(s)}$ is closed (open).

For case I, we apply the usual adiabatic theorem for the fixed-point
subspace of $\L^{(s)}$. Therefore, adiabatic evolution is governed
by the Berry connection $A$ (\ref{eq:connect}). Suppressing $s$,
let $\{\psi,j\}$ be any fixed point and its corresponding conserved
quantity of $\L_{\lr}$. Let $\{\St,J\}$ be any such pair for $\L_{\ul}$.
Since we start in a state $\rout\in\ulbig$, we need to consider whether
there are any transitions into $\lrbig$ caused by the terms $\tr\{j^{\dg}\p_{\a}\St\}$
of $A$. We know from Prop.~\ref{prop:3} that $j$ may only overlap
with the fixed-point space of $\L_{\lr}$ and with some part of any
decaying subspace. Since $\ulbig$ is not a decaying subspace, $j=\R_{\lr}(j)$.
Moreover, since the derivative $\p_{\a}$ cannot map $\ulbig$ into
$\lrbig$,\footnote{\label{fn:noleak}This is because of a so-called no-leak condition
\cite{ABFJ}: for any $|\phi_{1}\ket\bra\phi_{2}|\in\ulbig$, $\p_{\a}$
acts on either the ket or the bra part by the product rule and never
on both at the same time. Therefore, $\p_{\a}(|\phi_{1}\ket\bra\phi_{2}|)\in\thubig$.} we have that $\p_{\a}\St=\R_{\thu}(\p_{\a}\St)$. Combining these
yields
\begin{equation}
\tr\{j^{\dg}\p_{\a}\St\}=\tr\left\{ \R_{\lr}(j^{\dg})\R_{\thu}(\p_{\a}\St)\right\} =0\,,
\end{equation}
meaning that $A$ does not contain any transitions of $\ulbig$ fixed
points into those in $\lrbig$. This is true more generally, namely,
for the decomposition of the fixed-point set into blocks indexed by
$\varkappa$ discussed in Sec.~\ref{subsec:Algorithm-for-finding},
each block evolves separately in the adiabatic limit and no transitions
between different $\varkappa$ are allowed.

For case II, $\L_{\lr}$ is always gapped and so adiabatic evolution
is determined by the Berry connection of the fixed points of $\L_{\ul}$.
Therefore, in the adiabatic limit, a state in $\ulbig$ will remain
in the instantaneous $\ulbig$ as $s$ is slowly varied. Moreover,
one can instead consider only $\ps$ because the derivative operator
cannot map $\ulbig$ into $\lrbig$,$^{\ref{fn:noleak}}$
\begin{equation}
\tr\{J^{\dg}\p_{\a}\St\}=\tr\{\R_{\di}(J^{\dg})\R_{\thu}(\p_{\a}\St)\}=\tr\{J_{\ul}^{\dg}\p_{\a}\St\}\,.
\end{equation}

Thus, in both of the above cases, one only had to care about the parts
of the Berry connection $A$ consisting of the fixed points and conserved
quantities of $\L_{\ul}$ (\ref{eq:efflind}). Therefore, the adiabatic
evolution can be expressed using only the associated projection $\ps$
(\ref{eq:adiafinal}).

\section{\label{sec:Application:Matrix-product-state}Application: Matrix
product states}

For those who skimmed Secs. \ref{sec:Structure-of-conserved}-\ref{sec:Adiabaticity-of-Kraus},
those parts focused on the distinction between a channel $\A$ and
its corresponding faithful channel $\E\equiv\R_{\ul}\A\R_{\ul}$ ---
$\A$ restricted to the largest invariant subspace $\ulbig$. The
block $\lrbig$ thus forms a decaying subspace, but the asymptotic
projection $\ppp$ (\ref{eq:ap}) of $\A=\{A^{\ell}\}$ nevertheless
retains information from states in $\lrbig$ by transferring it into
$\ulbig$ through the operators $A_{\ur}^{\ell}$. A similar effect
is observed in the thermodynamic limit of matrix product states (MPS).
Here, the results about $\A$ are applied to obtain an unambiguous
thermodynamic limit for \textit{any} MPS that is translationally invariant
in the bulk, but has non-trivial boundary effects. Then, I show how
one can absorb any dependence of said limit on the decaying parts
$\lrbig$ of the bond degrees of freedom into the boundary conditions
$B$. This allows one to shorten the bond dimension and use the $\A_{\ul}=\E$
instead of the full $\A$. 

\begin{numtheorem}{{Proposition 4.}}

Let $|\Phi_{\A}^{\{B\}}\ket$ be an MPS with transfer channel $\A$
and boundary matrix $B$. Let $O$ be an operator on a single site
of a chain of length $M$. Then, there exists a $\bo$ such that
\begin{equation}
\lim_{M\rightarrow\infty}\bra\P_{\A}^{\{B\}}|O|\P_{\A}^{\{B\}}\ket=\lim_{M\rightarrow\infty}\bra\P_{\E}^{\{\bo\}}|O|\P_{\E}^{\{\bo\}}\ket\,,\label{eq:main-sat}
\end{equation}
where $\E=\A_{\ul}$ is the restriction of $\A$ to the maximal invariant
subspace $\ulbig$.

\end{numtheorem}A similar result holds for two single-site observables
that are infinitely far apart in the chain and infinitely far from
the boundaries. The proof consists of the remainder of this section.
Due to the aforemenetioned effects, $\bo\neq B_{\ul}$; an expression
for $\bo$ in terms of $B$ is derived below. The technique is somewhat
reverse of what has been done before (see Sec. 3.2.2 of \cite{Perez-Garcia2006}):
instead of first considering a general MPS, I simplify the corresponding
MPS of a general faithful channel $\E$ in the thermodynamic limit,
and then show how any contribution from its extension $\A$ can be
absorbed into the boundary.

\subsection{MPS from faithful channels}

Our playground is now a one-dimensional chain of $2M+1$ spins. Each
spin is $L$-dimensional and indexed by the physical index $\ell$.
Let us consider a faithful channel $\E=\{E_{\ell}\}_{\ell=1}^{L}$
(with $E_{\ell}$ $N\times N$ matrices for some \textit{bond dimension}
$N$) and write its corresponding MPS $|\Phi_{\E}^{\{B\}}\ket$:\footnote{Since applying identical transformations $U$ to each site is the
same as changing basis for the Kraus operators of $\E$, $E_{\ell}\rightarrow\sum_{\ell^{\prime}}U_{\ell\ell^{\prime}}E_{\ell^{\pr}}$,
more technically this is a study of sets of MPS related by local unitaries.} 
\begin{equation}
|\P_{\E}^{\{B\}}\ket\propto\!\!\!\!\sum_{\ell_{-M},\cdots,\ell_{M}=0}^{L-1}\!\!\!\!\tr\left\{ BE_{\ell_{-M}}\cdots E_{\ell_{M}}\right\} |\ell_{-M}\cdots\ell_{M}\ket\,,\label{eq:mps}
\end{equation}
where the $N\times N$ boundary matrix $B$ provides the ability to
pin the boundary of the chain to various states \cite{Ueda2011}.
Physically meaningful boundaries are either $B=I$ (the identity)
for translationally invariant MPS's or $B=|r\ket\bra l|$ for some
states $|r\ket,|l\ket$ quantifying the effect of the boundary on
the right and left ends of the chain. I consider cases where the bond
dimension $N$ is independent of system size $M$, noting there are
interesting cases where this is not so {[}\citealp{Perez-Garcia2006},~Appx.~A.1{]}.

The Kraus operators of $\E$ decompose into blocks $\varkappa$, as
in eq.~(\ref{eq:decomp}), with each block a noiseless subsystem.
Since the Kraus operators are block-diagonal, $\tr\{BE_{\ell_{-M}}\cdots E_{\ell_{M}}\}$
decouples into a sum of traces over each block. Each block corresponds
to its own MPS, and the different MPS will not overlap with each other
in the thermodynamic limit ($M\rightarrow\infty$). Thus, we can consider
only one block from now on. In this block, the Kraus operators factor
as $E_{\ell}=U\otimes R_{\ell}$, where $\{R_{\ell}\}$ is an irreducible
channel. The DFS part $U$ can be diagonalized, so the MPS once again
decouples. This leaves us with only the irreducible parts $R_{\ell}$
and automatically puts our MPS into canonical form \cite{Perez-Garcia2006,Cirac}.
We assume from now on that $\E$ is irreducible.

One can obtain the normalization of the MPS in the thermodynamic limit
by tracing out the sites one by one:
\begin{align}
\lim_{M\rightarrow\infty}\bra\P_{\A}^{\{B\}}|\P_{\A}^{\{B\}}\ket & =\lim_{M\rightarrow\infty}\Tr\{\E^{\a(2M+1)}\B\}\\
 & =\Tr\{\ps\B\}\,,
\end{align}
where $\B\equiv B\ot\overline{B}$, the trace is over superoperator
space, and $\a$ is the parameter that eliminates phases stemming
from rotating points. We assume $\E$ has $N$ rotating points, meaning
that all peripheral eigenvalues are $N$th roots of unity. Thus, setting
$\a=N$ yields an unambiguous thermodynamic limit for general boundary
conditions.\footnote{Note that $|\Psi_{\ps}^{\{B\}}\ket$ is also the fixed-point MPS that
$|\Phi_{\E}^{\{B\}}\ket$ flows to under RG transformations \cite{Verstraete2005,Wei2010,Cirac2017},
and $\lim_{M\rightarrow\infty}\bra\P_{\E}^{\{B\}}|\P_{\E}^{\{B\}}\ket=\bra\Psi_{\ps}^{\{B\}}|\Psi_{\ps}^{\{B\}}\ket$,
so simplifying $\ps$ also yields insight into the structure of RG
fixed points.} Similarly, the expectation value of an observable $O$ on a site
in this thermodynamic limit is
\begin{equation}
\lim_{M\rightarrow\infty}\bra\P_{\E}^{\{B\}}|O|\P_{\E}^{\{B\}}\ket=\Tr\{\ps\O\ps\B\}\,,\label{eq:obs}
\end{equation}
where the corresponding superoperator is
\begin{equation}
\O\equiv\sum_{k,\ell=0}^{L-1}\bra\ell|O|k\ket E_{k}\ot\overline{E_{\ell}}\,.\label{eq:O-super}
\end{equation}
Using what we know about $\ps$ (\ref{eq:asymp-irreducible}) for
irreducible channels and plugging in eqs.~(\ref{eq:O-super}) and
(\ref{eq:perm}), eq.~(\ref{eq:obs}) becomes
\begin{align}
\lim_{M\rightarrow\infty}\bra\P_{\E}^{\{B\}}|O|\P_{\E}^{\{B\}}\ket & =N^{2}\sum_{k,\ell=0}^{L-1}\bra\ell|O|k\ket\times\label{eq:exp-1}\\
 & \!\!\!\!\!\!\!\!\!\!\!\!\!\!\!\!\!\!\!\!\!\!\!\!\!\!\!\!\!\!\!\!\!\!\!\!\!\!\!\!\!\!\times\sum_{\m\in\Z_{N}}\tr\left\{ \varPi_{\m}E_{k}\varPi_{\m+1}\varrho E_{\ell}^{\dg}\right\} \tr\left\{ \varPi_{\m+1}B\varPi_{\m}\varrho B^{\dg}\right\} \,.\nonumber 
\end{align}
Each term in the sum over $\m$ is determined by the boundary block
$\varPi_{\m+1}B\varPi_{\m}$, and the unique structure of $\E$ implies
that all other parts of $B$ are irrelevant in the thermodynamic limit.
In fact, each term in the sum corresponds to a contribution coming
from a distinct MPS. Another way to see this is by taking the lattice
size $M$ to be a multiple of $N$ and performing the periodic decomposition
{[}\citealp{Perez-Garcia2006}, Thm.~5{]}.

\subsection{Adding decay}

Now let us extend $\E$ to $\A$, meaning that the MPS $|\P_{\A}^{\{B\}}\ket$
is the same eq.~(\ref{eq:mps}) but with $E_{\ell}\rightarrow A^{\ell}$.
After some algebra, the coefficient $\tr\{B(A^{\ell_{-M}}\cdots A^{\ell_{M}})\}$
becomes equal to
\begin{align}
 & \,\,\,\phantom{+}\tr\left\{ B_{\ul}(E_{\ell_{-M}}\cdots E_{\ell_{M}})\right\} +\tr\left\{ B_{\lr}(A_{\lr}^{\ell_{-M}}\cdots A_{\lr}^{\ell_{M}})\right\} \nonumber \\
 & +{\displaystyle \sum_{m=-M}^{M}}\tr\left\{ B_{\ll}(E_{\ell_{-M}}\cdots E_{\ell_{m-1}})A_{\ur}^{\ell_{m}}(A_{\lr}^{\ell_{m+1}}\cdots A_{\lr}^{\ell_{M}})\right\} \,.\label{eq:mps2}
\end{align}
This is precisely the same structure occurring in, e.g., product vacua
with boundary states \cite{Bachmann2012,Bachmann2014}. The first
term corresponds to the MPS $|\P_{\E}^{\{B\}}\ket$ we already studied.
The second term vanishes in the thermodynamic limit because its corresponding
transfer matrix does not have any fixed points. When $B_{\ll}\neq0$,
the third term is present and has the form of a translationally-invariant
domain wall excitation. Therefore, the decaying subspace $\lrbig$
corresponds to extra degrees of freedom on each site which house such
an excitation. This excitation is never present for periodic boundary
conditions ($B=I$), allowing one to straightforwardly derive an standard
form for that case \cite{Perez-Garcia2006,Cirac}. However, here we
focus on ``twisted'' boundaries $B_{\ll}\neq0$.

The single-site expectation value (\ref{eq:obs}) is now
\begin{equation}
\lim_{M\rightarrow\infty}\bra\P_{\A}^{\{B\}}|O|\P_{\A}^{\{B\}}\ket=\Tr\{\ps\O\ppp\B\}\,,\label{eq:obs-1}
\end{equation}
where the second projection is now $\ppp$ due to the presence of
the boundary term $\B_{\ll}$. Simplifying, the analogue of eq.~(\ref{eq:exp-1})
is then
\begin{align}
\lim_{M\rightarrow\infty}\bra\P_{\A}^{\{B\}}|O|\P_{\A}^{\{B\}}\ket & =N^{2}\sum_{k,\ell=0}^{L-1}\bra\ell|O|k\ket\times\label{eq:exp-2}\\
 & \!\!\!\!\!\!\!\!\!\!\!\!\!\!\!\!\!\!\!\!\!\!\!\!\!\!\!\!\!\!\!\!\!\!\!\!\!\!\!\!\!\!\!\!\!\!\!\!\!\!\!\!\times\sum_{\m\in\Z_{N}}\tr\left\{ \varPi_{\m}E_{k}\varPi_{\m+1}\varrho E_{\ell}^{\dg}\right\} \tr\left\{ \varPi_{\m+1}\ppp(B\varPi_{\m}\varrho B^{\dg})\right\} \,.\nonumber 
\end{align}
Now there are two contributions from $\ppp$, $\ps$ and $\ppp\R_{\lr}$,
with the latter determined by Prop.~\ref{prop:3}. However, it is
possible to absorb both into a new boundary matrix $\bo$ such that
the expectation value in $|\Phi_{\E}^{\{\widetilde{B}\}}\ket$ is
the same as that of $|\P_{\A}^{\{B\}}\ket$. To determine $\bo$,
note that for eqs.~(\ref{eq:exp-1}) (with $B\rightarrow\bo$) and
(\ref{eq:exp-2}) to be equal, we must have
\begin{equation}
\tr\{\varPi_{\m+1}\bo\varPi_{\m}\varrho\bo^{\dg}\}=\tr\left\{ \varPi_{\m+1}\ppp(B\varPi_{\m}\varrho B^{\dg})\right\} \equiv x_{\m}\,.\label{eq:need}
\end{equation}
This $x_{\m}\geq0$ since $B\varPi_{\m}\varrho B^{\dg}$ is positive,
$\ppp$ is a channel, and $\varPi_{\m+1}$ is a projection. Now, write
out the projections, $\varPi_{\m}=\sum_{\xi}|\m,\xi\ket\bra\m,\xi|$
(with $\xi$ depending on $\m$), to obtain
\begin{equation}
\tr\{\varPi_{\m+1}\bo\varPi_{\m}\varrho\bo^{\dg}\}=\sum_{\xi}\varrho_{\m,\xi}\sum_{\xi^{\pr}}|\bra\m+1,\xi^{\pr}|\bo|\m,\xi\ket|^{2}\,,
\end{equation}
where $\varrho_{\m,\xi}\equiv\bra\m,\xi|\varrho|\m,\xi\ket>0$ (since
$\varrho$ is full-rank). Therefore, setting
\begin{equation}
\bra\m+1,\xi|\bo|\m,\xi^{\pr}\ket=\sqrt{N\frac{x_{\m}}{\tr\left\{ \varPi_{\m+1}\right\} }}
\end{equation}
satisfies the equality (\ref{eq:need}). (We used $\tr\{\varPi_{n}\varrho\}=\nicefrac{1}{N}$;
see Sec.~\ref{subsec:Irreducible-channels}.) Thus, we have shown
how to construct a $\bo$ that satisfies eq.~(\ref{eq:main-sat}).

\subsection{Other observables}

The same result occurs with two observables $O^{(1)}$ and $O^{(2)}$
(with corresponding superoperators $\O^{(1)}$ and $\O^{(2)}$) separated
by some number of sites $W$,
\begin{equation}
\!\!\!\!\lim_{M\rightarrow\infty}\bra\P_{\A}^{\{B\}}|O^{(1)}O^{(2)}|\P_{\A}^{\{B\}}\ket=\tr\left\{ \R_{\A}\O^{(1)}\A^{W}\O^{(2)}\ppp\B\right\} ,
\end{equation}
and take the $W\rightarrow\infty$ limit by blocking sites in order
to get rid of any phases from rotating points. This yields
\begin{equation}
\!\!\!\!\lim_{M,W\rightarrow\infty}\bra\P_{\A}^{\{B\}}|O^{(1)}O^{(2)}|\P_{\A}^{\{B\}}\ket=\tr\left\{ \ps\O^{(1)}\ps\O^{(2)}\ppp\B\right\} .
\end{equation}
After the simplifications of the previous section, we observe that
in order for $\Phi_{\E}^{\{\widetilde{B}\}}$ to have the same expectation
values, one now has to tune the ``next nearest-neighbor'' elements
$\tr\{\varPi_{\m+2}\bo\varPi_{\m}\varrho\bo^{\dg}\}$ {[}independent
from $x_{\m}$ (\ref{eq:need}) for $N\geq3${]}. The potentially
interesting case of $N=2$ is left for future investigation.

Similarly, consider an observable touching the left boundary:
\begin{align}
\lim_{M\rightarrow\infty}\bra\P_{\A}^{\{B\}}|O^{(L)}|\P_{\A}^{\{B\}}\ket & =\Tr\{\O^{(L)}\R_{\A}\B\}\,.\label{eq:leftside}
\end{align}
Somewhat surprisingly, considering an observable touching the right
boundary produces something completely different:
\begin{equation}
\lim_{M\rightarrow\infty}\bra\P_{\A}^{\{B\}}|O^{(R)}|\P_{\A}^{\{B\}}\ket=\Tr\{\R_{\A}\O^{(R)}\B\}\,.
\end{equation}
Notice how $\R_{\A}$ now comes \textit{before} the observable {[}cf.
the first equality of Eq. (\ref{eq:leftside}){]}, which results in
a series of new terms stemming from combinations of $A_{\ur}^{\ell}$
and $A_{\lr}^{\ell}$ with $B$. Why is there an asymmetry between
the two boundaries? This has to do with the fact that we had initially
assumed an asymmetric form for our MPS, $A^{\ell}=A_{\thd}^{\ell}$.
The domain wall-type excitations represented by the third term in
Eq. (\ref{eq:mps2}) are such that there is always a $A_{\lr}$ at
the right-most site $M$.

\section{Conclusion}

An important property of quantum channels $\A$ is their asymptotics,
i.e., their behavior in the limit of infinite applications, akin to
the infinite-time limit of Lindbladians \cite{pub011,ABFJ}. An infinite
product of $\A$ produces the channel's asymptotic projection $\ppp$
--- a projection on all of the non-decaying eigenspaces of the channel
(i.e., whose eigenvalues have unit modulus). The superoperator $\ppp$
can be constructed out of the channel's left and right rotating points,
or as they are called here, conserved quantities $J$ and steady-state
basis elements $\St$. The aim of the first half of this work is to
determine which parts of an initial state are conserved in the limit
of infinite applications of the channel, extending analogously motivated
work for Lindbladian channels \cite{pub011}. This involves a derivation
of the structure of both the $\st$ (already known) and the $J$ (developed
here) making up $\ppp$.

I start off with two statements about channels admitting a full-rank
fixed point, which I call faithful. The first is that any $J$ commute
with a faithful channel's Kraus operators up to a phase. The second
is that the eigenvalue of any diagonalizable $J$ of a faithful channel
is an $N$th root of unity, where $N$ is tightly bounded by trace
norm of $J$. A third result deals with determining the dependence
of the asymptotic state on the initial state and on properties of
$\A$. An analytical formula is derived that quantifies the dependence
of the final state on initial states located in $\A$'s decaying eigenspaces
(i.e., whose eigenvalues are less than one in modulus).

The aim of the second half of this work is to apply results from the
first half to an adiabatic limit for channels and to matrix product
states. In both applications, it is shown that the channel $\E$---the
restriction of $\A$ to its largest invariant subspace---is sufficient
to work with for the outlined purposes. An adiabatic limit for channels
in developed, and it is shown that the gap of the part of the channel
acting on the decaying subspace can close. The second application
is to matrix product states (MPS), where asymptotics come into play
in the thermodynamic limit or in the limit of infinite renormalization
transformations. In the same way that asymptotic states depend on
initial states, the thermodynamic limit of MPS (whose transfer matrices
admit more than one fixed point) depends on the boundary conditions.
In such situations, the effects of any decaying bond degrees of freedom
can be absorbed in the boundary conditions. Quantitatively, it is
shown that the thermodynamic expectation value of a local operator
$O$ with an MPS having transfer matrix $\A$ and boundary condition
$B$ is equivalent to the expectations values with MPS having transfer
matrix $\E$ and a modified boundary condition $\bo$ (\ref{eq:main-sat}).
This allows one to remove extra bond degrees of freedom when considering
the thermodynamic limit. Since similar two-dimensional MPS (often
called ``PEPS'' \cite{Cirac2011}) also correspond to a transfer
channel, such techniques may be further generalized to study PEPS
dependence on boundaries.
\begin{acknowledgments}
Insightful discussions with B. Bradlyn, X. Chen, M. Fraas, L. Jiang,
D. Perez-Garcia, M. B. Sahinoglu, N. Schuch, F. Ticozzi, A. M. Turner,
and M. M. Wolf are acknowledged. This research was supported in part
by the National Science Foundation (PHY17-48958) and the Walter Burke
Institute for Theoretical Physics at Caltech. I thank KITP Santa Barbara
for their hospitality as part of the Quantum Physics of Information
workshop.
\end{acknowledgments}

\appendix

\section{Proofs\label{sec:Proofs}}
\begin{prop}
\label{prop:1}Let $\E=\left\{ E_{\ell}\right\} $ be a faithful channel.
Let $J$ be a conserved quantity of $\E$, i.e., $\E^{\dgt}\left(J\right)=e^{-i\la}J$
for some real $\la$. Then, for all $\ell$,
\begin{equation}
JE_{\ell}=e^{-i\la}E_{\ell}J\,.
\end{equation}
\end{prop}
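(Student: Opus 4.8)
The plan is to exploit that trace preservation (\ref{eq:cptp}) makes the adjoint $\E^{\dgt}$ a \emph{unital} completely positive map, $\E^{\dgt}(I)=I$. Granted that, I would (i) apply the Kadison--Schwarz inequality to $\E^{\dgt}$ at the operator $J$, (ii) use the full-rank fixed point $\r$ to upgrade that inequality to an equality, and (iii) feed the equality into a Stinespring dilation of $\E^{\dgt}$ to extract the commutation relation one Kraus operator at a time.

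First, since $\E^{\dgt}$ is unital and completely positive, the Schwarz inequality gives $\E^{\dgt}(J^{\dg}J)\geq\E^{\dgt}(J)^{\dg}\E^{\dgt}(J)$. Because $J$ is a conserved quantity, $\E^{\dgt}(J)=e^{-i\la}J$ and the right-hand side collapses to $J^{\dg}J$, so
\begin{equation}
Y\equiv\E^{\dgt}(J^{\dg}J)-J^{\dg}J\geq0\,.
\end{equation}
Now faithfulness enters. The fixed-point identity $\E(\r)=\r$ yields the adjoint relation $\tr\{\r\,\E^{\dgt}(X)\}=\tr\{\E(\r)X\}=\tr\{\r X\}$ for every $X$ (just cyclicity of the trace), so tracing the nonnegative operator $Y$ against $\r$ gives exactly zero. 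Since $\r>0$ is invertible, $\tr\{\r Y\}=\tr\{\r^{1/2}Y\r^{1/2}\}=0$ together with $Y\geq0$ forces $Y=0$. Hence the Schwarz inequality is saturated, $\E^{\dgt}(J^{\dg}J)=\E^{\dgt}(J)^{\dg}\E^{\dgt}(J)$, i.e. $J$ lies in the multiplicative domain of $\E^{\dgt}$.

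To convert this equality into a statement about the individual $E_{\ell}$, I pass to the Stinespring isometry $V:\H\rightarrow\mathcal{K}\ot\H$ defined by $V|\psi\ket=\sum_{\ell}|\ell\ket\ot E_{\ell}|\psi\ket$, which satisfies $V^{\dg}V=\sum_{\ell}E_{\ell}^{\dg}E_{\ell}=I$ and $\E^{\dgt}(X)=V^{\dg}(I\ot X)V$. Writing $\tilde{J}=I\ot J$, the saturated Schwarz equality reads $V^{\dg}\tilde{J}^{\dg}(I-VV^{\dg})\tilde{J}V=0$; as $I-VV^{\dg}$ is a projection this means $(I-VV^{\dg})\tilde{J}V=0$, i.e.
\begin{equation}
\tilde{J}V=VV^{\dg}\tilde{J}V=V\,\E^{\dgt}(J)=e^{-i\la}VJ\,.
\end{equation}
Reading off the coefficient of each ancilla basis vector $|\ell\ket$ on both sides then gives $JE_{\ell}=e^{-i\la}E_{\ell}J$ for every $\ell$, the claim.

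The main obstacle is the middle step: the entire argument hinges on promoting the Schwarz \emph{inequality} to an \emph{equality}, and this is precisely where faithfulness is indispensable — without a strictly positive fixed point the trace pairing would only control $Y$ on a proper subspace, and the per-Kraus relation could fail. The remaining work is purely algebraic bookkeeping: checking that saturation is equivalent to $(I-VV^{\dg})\tilde{J}V=0$, and that this single operator identity, resolved in the ancilla basis, simultaneously reproduces the commutation relation for all $\ell$.
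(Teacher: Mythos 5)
Your proof is correct, but it is organized quite differently from the paper's. The paper's argument is a single elementary computation with the dissipation function: it defines $X_{\ell}\equiv JE_{\ell}-e^{-i\la}E_{\ell}J$, verifies by direct algebra that $\sum_{\ell}X_{\ell}^{\dg}X_{\ell}=\E^{\dgt}(J^{\dg}J)-J^{\dg}J$, traces against the faithful fixed point to kill the right-hand side, and concludes $X_{\ell}=0$ immediately from positivity. You instead obtain positivity of $Y=\E^{\dgt}(J^{\dg}J)-J^{\dg}J$ abstractly from the Kadison--Schwarz inequality for the unital CP map $\E^{\dgt}$, use the identical trace-against-$\r$ step to force $Y=0$ (i.e., $J$ lies in the multiplicative domain of $\E^{\dgt}$), and then need an extra device --- the Stinespring isometry $V$ --- to convert that saturation into the per-Kraus relation. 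The two routes are in fact the same positivity argument in different packaging: expanding your dilation operator gives
\begin{equation}
(I-VV^{\dg})(I\ot J)V|\psi\ket=\sum_{\ell}|\ell\ket\ot\left(JE_{\ell}-e^{-i\la}E_{\ell}J\right)|\psi\ket\,,
\end{equation}
so your Schwarz defect $V^{\dg}\tilde{J}^{\dg}(I-VV^{\dg})\tilde{J}V$ is precisely the paper's $\sum_{\ell}X_{\ell}^{\dg}X_{\ell}$, and your final ``read off the ancilla components'' step recovers exactly the $X_{\ell}$. What your version buys is conceptual clarity and generality: the first two-thirds never mentions Kraus operators, identifies the statement as a multiplicative-domain property, and would survive in settings where one prefers representation-independent (or infinite-dimensional, von Neumann algebraic) language. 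What the paper's version buys is economy: no appeal to Kadison--Schwarz or Stinespring, one line of algebra, and the conclusion drops out with no bookkeeping. Both hinge on the same use of faithfulness --- $\tr\{\r Y\}=0$ with $Y\geq0$ and $\r>0$ forces $Y=0$ --- which, as you correctly note, is the step that fails without a full-rank fixed point.
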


\begin{proof}
This result reduces to known results for irreducible \cite[Thm. 4.2]{Evans1978},
ergodic \cite[Thm. 9]{Burgarth2013a}, or unital \cite[Thm. 4]{Bialonczyk2017}
channels. Another proof is using Thms. 4.1-4.2 and Corollary 4.3 in
Ref. \cite{Novotny2012}. This proof extends an often-used \cite{robin,Choi2006}
application of the dissipation function \cite{Lindblad1976} from
fixed points to rotating points. An analogous extension for Lindbladians
is in Ref. \cite{ABFJ}. Let 
\begin{equation}
X_{\ell}\equiv JE_{\ell}-e^{-i\la}E_{\ell}J
\end{equation}
for each Kraus operator $E_{\ell}$. Then, after some algebra, 
\begin{equation}
\sum_{\ell}X_{\ell}^{\dg}X_{\ell}=\E^{\dgt}\left(J^{\dg}J\right)-J^{\dg}J\,.
\end{equation}
Now multiply both sides by a full-rank fixed point $\rss$ and take
the trace. Moving $\E^{\dgt}$ under the Hilbert-Schmidt inner product
so that it acts on $\rss$ yields
\begin{equation}
\tr\left\{ \E\left(\rss\right)J^{\dg}J\right\} -\tr\left\{ \rss J^{\dg}J\right\} =0
\end{equation}
for the right-hand side, meaning that
\[
\sum_{\ell}\tr\left\{ \rss X_{\ell}^{\dg}X_{\ell}\right\} =0\,.
\]
Since $X_{\ell}^{\dg}X_{\ell}\geq0$ and $\rss>0$, the only way for
the above to hold is for $X_{\ell}=0$, which implies the statement.
\end{proof}
\begin{prop}
\label{prop:2}Let $\E=\left\{ E_{\ell}\right\} $ be a faithful channel.
Let $\jd$ be such that $\E^{\dgt}(\jd)=e^{-i\la}\jd$ for some real
$\la$ and assume $\jd$ is diagonalizable. Then, there exists an
integer $n$ such that 
\begin{equation}
\la=\frac{2\pi}{N}n\,\,\,\,\,\,\,\text{for some}\,\,\,\,\,\,\,N\leq\left\Vert \jd\right\Vert _{1}\,,
\end{equation}
where $\left\Vert \,\,\,\right\Vert _{1}$ is the trace norm.
\end{prop}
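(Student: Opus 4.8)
The plan is to exploit the intertwining relation of Proposition~\ref{prop:1} at the level of the $D$-dimensional Hilbert space rather than on the $D^{2}$-dimensional operator space, which is what yields the sharp bound $N\le D$. First I would apply Proposition~\ref{prop:1} to the conserved quantity $\jd$ to obtain $\jd E_{\ell}=e^{-i\la}E_{\ell}\jd$ for every Kraus operator $E_{\ell}$. Since $\jd$ is diagonalizable and (assuming, as we may, that it is nonzero) therefore has at least one nonzero eigenvalue, I fix such an eigenvalue $\m\neq0$ with eigenvector $|v\ket$, so that $\jd|v\ket=\m|v\ket$.

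The key step is to show that the spectrum of $\jd$ is invariant under multiplication by $e^{-i\la}$. Feeding $|v\ket$ into the intertwining relation gives $\jd(E_{\ell}|v\ket)=e^{-i\la}E_{\ell}\jd|v\ket=e^{-i\la}\m\,(E_{\ell}|v\ket)$, so every nonzero vector $E_{\ell}|v\ket$ is an eigenvector of $\jd$ with eigenvalue $e^{-i\la}\m$. That at least one $E_{\ell}|v\ket$ is nonzero follows from trace preservation, $\sum_{\ell}\|E_{\ell}|v\ket\|^{2}=\bra v|\sum_{\ell}E_{\ell}^{\dg}E_{\ell}|v\ket=\bra v|v\ket>0$, using $\sum_{\ell}E_{\ell}^{\dg}E_{\ell}=I$ from Eq.~(\ref{eq:cptp}). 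Hence $e^{-i\la}\m$ is again an eigenvalue of $\jd$, and iterating, $e^{-ik\la}\m$ is an eigenvalue for every integer $k\ge0$.

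I would then close the argument by pigeonhole. All the numbers $e^{-ik\la}\m$ share the fixed modulus $|\m|\neq0$, yet $\jd$ acts on a $D$-dimensional space and so has at most $D$ distinct eigenvalues. Among $e^{-ik\la}\m$ for $k=0,1,\dots,D$ two must therefore coincide, $e^{-ik_{1}\la}\m=e^{-ik_{2}\la}\m$ with $0\le k_{1}<k_{2}\le D$, forcing $e^{-i(k_{2}-k_{1})\la}=1$. Thus $e^{-i\la}$ is a root of unity; letting $N$ denote its order, the values $e^{-ik\la}\m$ for $k=0,\dots,N-1$ are $N$ distinct eigenvalues of $\jd$, whence $N\le D$, while $e^{-iN\la}=1$ gives $\la=\frac{2\pi}{N}n$ for some integer $n$, as claimed.

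The only delicate point is the non-annihilation claim that some $E_{\ell}|v\ket\neq0$, and this is exactly where the channel structure is indispensable — though here it is supplied by trace preservation rather than by faithfulness directly, with faithfulness entering only through Proposition~\ref{prop:1}. I expect this to be the main obstacle in the sense that it is the one place a general (non-channel) phase-intertwining operator could fail: an eigenspace annihilated by all $E_{\ell}$ would break the orbit. I would also contrast this Hilbert-space orbit argument with the coarser route of observing that the powers $\jd^{k}$ are themselves conserved quantities with eigenvalues $e^{-ik\la}$ (per the remark following Proposition~\ref{prop:1}); finiteness of the $D^{2}$-dimensional operator spectrum already forces $\la$ to be a rational multiple of $2\pi$, but only with the weaker bound $N\le D^{2}$. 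Descending to the $D$-dimensional space is what delivers the optimal $N\le D$.
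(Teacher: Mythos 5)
Your proof is correct, and it takes a genuinely different route from the paper's. The paper argues in two stages: first in operator space, using (via Prop.~\ref{prop:1}) that the powers $\jd^{k}$ are themselves conserved quantities with eigenvalues $e^{-ik\la}$, so finiteness of the peripheral spectrum forces $e^{-i\la}$ to be a root of unity --- exactly the ``coarser route'' you mention at the end, giving only $N\le D^{2}$; then, to sharpen to $N\le D$, it writes the diagonalizable $\jd$ as a sum of spectral projections $\varPi_{n}$ over roots of unity, asserts the existence of some $E\in\text{span}\{E_{\ell}\}$ whose compression $\varPi E\varPi$ is invertible on the support $\varPi$ of $\jd$, and takes determinants of Eq.~(\ref{eq:com}) restricted to that support to get $e^{-iR\la}=1$ with $R=\tr\{\varPi\}\le D$. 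You instead descend directly to the $D$-dimensional Hilbert space: the intertwining relation sends $\m$-eigenvectors of $\jd$ to $e^{-i\la}\m$-eigenvectors, trace preservation keeps the orbit from dying, and pigeonhole among at most $D$ distinct eigenvalues finishes. Your route buys simplicity: it bypasses the paper's most delicate and most tersely justified step (the existence of the invertible compression $\varPi E\varPi$), replacing it with the elementary estimate $\sum_{\ell}\|E_{\ell}|v\ket\|^{2}=\bra v|v\ket>0$, and it exhibits extra structure, namely that the nonzero spectrum of $\jd$ is invariant under multiplication by $e^{-i\la}$. The paper's route buys two side results: that $\jd^{N}$ is a projection (a fact advertised in the main text), and that the order of $e^{-i\la}$ divides the rank $R$ of $\jd$'s support, which is marginally finer arithmetic information than your bound by the number of distinct nonzero eigenvalues. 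Both proofs rest essentially on Prop.~\ref{prop:1}, and both use diagonalizability --- the paper to obtain the spectral decomposition, you only to guarantee a nonzero eigenvalue $\m$.
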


\begin{proof}
The tighter bound compelements similar work (\cite{groh}; \cite{wolf2010},
Thm. 6.6; \cite{Fannes1992}, Prop. 3.3; \cite{Bialonczyk2017}, Corr.
3; \cite{baumr}, Prop. 28). First, there must exist an $N\geq1$
such that 
\begin{equation}
\varPi\equiv\jd^{N}\label{eq:proj}
\end{equation}
is a projection ($\jd^{2N}=\jd^{N}$). To show this, assume by contradiction
that all powers of $\jd$ are distinct. Then, there is an infinite
sequence of conserved quantities $\jd^{N}$ with eigenvalues $e^{-iN\la}$
due to Prop.~\ref{prop:1}. But the Hilbert space is $D$-dimensional,
so there are at most $D^{2}$ fixed/rotating points. Moreover, $e^{-iN\la}=1$;
otherwise, $\jd^{N}$ would have a different eigenvalue than $\jd^{2N}$.
Therefore, there exists an $N\leq D^{2}$ such that $\la=\frac{2\pi}{N}n$
for $n\in\left\{ 0,1,\cdots N-1\right\} $, i.e., $\jd$ has eigenvalues
which are $N$th roots of unity.

Now I show that $N\leq\left\Vert \jd\right\Vert _{1}$. Since $\jd$
is diagonalizable, one can write
\begin{equation}
\jd=\sum_{n\in\Z_{N}}\o^{n}\varPi_{n}\,,
\end{equation}
where $\o\equiv e^{i\frac{2\pi}{N}}$ and the projections onto the
eigenspace of $\jd$ corresponding to eigenvalue $\o^{n}$ are
\begin{equation}
\varPi_{n}=\frac{1}{N}\sum_{k\in\Z_{N}}\o^{-nk}\jd^{k}\,.
\end{equation}
Proposition~\ref{prop:1} implies that for all $\ell$,
\begin{equation}
\varPi_{n}E_{\ell}=E_{\ell}\varPi_{n+1}\,\,\,\,\,\,\,\,\text{modulo }N\,.\label{eq:perm}
\end{equation}
Projecting $E_{\ell}$ onto the support of $\jd$ using $\varPi=\sum_{n}\varPi_{n}$
(\ref{eq:proj}) and applying the above yields a matrix of exactly
$N$ blocks $\varPi_{n}E_{\ell}\varPi_{n+1}$,
\begin{equation}
\varPi E_{\ell}\varPi=\sum_{n\in\Z_{N}}\varPi_{n}E_{\ell}\varPi_{n+1}\,.
\end{equation}
In the basis of eigenstates of $\varPi$, this can be viewed as a
matrix of dimension
\begin{equation}
\tr\varPi=\tr\jd^{N}=\left\Vert \jd\right\Vert _{1}\,.
\end{equation}
Since $\varPi_{n}$ are distinct orthogonal projections, the largest
number of blocks in $\varPi E_{\ell}\varPi$ occurs when all $\varPi_{n}$
are rank one; then $N=\left\Vert \jd\right\Vert _{1}$. Generally,
$N\leq\left\Vert \jd\right\Vert _{1}$.
\end{proof}
\begin{prop}
\label{prop:3}The conserved quantities of $\A$ corresponding to
eigenvalues $e^{i\la}$ areThe conserved quantities of $\A$ corresponding
to eigenvalues $e^{i\la}$ are
\begin{equation}
J=J_{\ul}+J_{\lr}=J_{\ul}+\RR_{\lr}^{(\la)\dgt}\A^{\dgt}(J_{\ul})\,,
\end{equation}
where $J_{\ul}$ are the conserved quantities of $\A_{\ul}=\E$ and
\begin{equation}
\RR_{\lr}^{(\la)}=-\left(\A-e^{i\la}\right)_{\lr}^{-1}\,.\label{eq:resolvent-1}
\end{equation}
\end{prop}

\begin{proof}
I generalize previous results (\cite{robin}, Lemma 5.8; \cite{Cirillo2015},
Prop. 7) to the case of rotating points. Start by writing the eigenvalue
equation 
\begin{equation}
Je^{-i\la}=\A^{\dgt}\left(J\right)
\end{equation}
in terms of the four-corners decomposition of $J$ and 
\begin{equation}
\A^{\dgt}=\left[\begin{array}{ccc}
\E_{\ul}^{\dgt} & 0 & 0\\
\R_{\of}\A^{\dgt}\R_{\ul} & \A_{\of}^{\dgt} & 0\\
\R_{\lr}\A^{\dgt}\R_{\ul} & \R_{\lr}\A^{\dgt}\R_{\of} & \A_{\lr}^{\dgt}
\end{array}\right]\,.
\end{equation}
The three zeroes in the above decomposition for $\A^{\dgt}$ can be
derived by brute-force use of eq. (\ref{eq:bl}). The eigenvalue equation
is equivalent to\begin{subequations} 
\begin{align}
J_{\ul}e^{-i\la} & =\A_{\ul}^{\dg}(J_{\ul})\\
J_{\of}e^{-i\la} & =\R_{\of}\A^{\dgt}(J_{\ul})+\A_{\of}^{\dgt}(J_{\of})\label{eq:of}\\
J_{\lr}e^{-i\la} & =\R_{\lr}\A^{\dgt}(J_{\ul})+\R_{\lr}\A^{\dgt}(J_{\of})+\A_{\lr}^{\dgt}(J_{\lr})\,.\label{eq:lr}
\end{align}
\end{subequations}First, let's look at the term (\ref{eq:of}):
\begin{equation}
\R_{\of}\A^{\dgt}(J_{\ul})=\R_{\ur}\A^{\dgt}(J_{\ul})+\R_{\ll}\A^{\dgt}(J_{\ul})\,.
\end{equation}
Using eqs. (\ref{eq:com}) and (\ref{eq:conds}), one can see that
\begin{equation}
\R_{\ur}\A^{\dgt}(J_{\ul})=\sum_{\ell}A_{\ul}^{\ell\dg}J_{\ul}A_{\ur}^{\ell}=e^{-i\la}J_{\ul}\sum_{\ell}A_{\ul}^{\ell\dg}A_{\ur}^{\ell}=0\,,
\end{equation}
and similarly for the second term $\R_{\ll}\A^{\dgt}(J_{\ul})$. This
reduces eq. (\ref{eq:of}) to
\begin{equation}
J_{\of}e^{-i\la}=\A_{\of}^{\dgt}(J_{\of})\,.
\end{equation}
This can in turn be used to show that $J_{\of}=0$ \cite{ABFJ}. Assume
by contradiction that $J_{\of}\neq0$. Then, there must exist a corresponding
right fixed point $\St$ with $\A_{\of}(\St)=0$. But we have already
assumed that all right fixed points are in $\ulbig$. Therefore, $J_{\of}=0$.

The remaining eq. (\ref{eq:lr}) becomes
\begin{equation}
J_{\lr}e^{-i\la}=\R_{\lr}\A^{\dgt}(J_{\ul})+\A_{\lr}^{\dgt}(J_{\lr})
\end{equation}
and can be used to solve for $J_{\lr}$ in terms of $J_{\ul}$ (since
$\A-e^{i\la}$ is invertible on $\lrbig$), obtaining the statement.
\end{proof}
\bibliographystyle{unsrtnat}
\bibliography{C:/Users/russi/Documents/library}

\end{document}